\documentclass[twocolumn,3p]{elsarticle}  

\usepackage{fancyhdr}
\fancyhead{} 
 \usepackage{amsmath}
\usepackage{cases}
\usepackage{stfloats}
\usepackage{enumerate}
\usepackage{url}
\usepackage{multicol}
\usepackage{graphicx}
\usepackage{wrapfig}
\usepackage{picinpar}
\usepackage{cutwin}
\usepackage{picins}
\usepackage{balance}
\usepackage{algorithm}
\usepackage{algpseudocode}
\usepackage{color}
 \usepackage{etoolbox}

\newtoggle{ACM}
\toggletrue{ACM}
\togglefalse{ACM}

\newtoggle{IEEEcls}
\toggletrue{IEEEcls}
\togglefalse{IEEEcls}

\newtoggle{ElsJ}
\toggletrue{ElsJ}


\floatname{algorithm}{Algorithm}

\usepackage{amsthm}
\theoremstyle{plain}
\newtheorem{theorem}{Theorem}

\theoremstyle{definition}
\newtheorem{definition}{Definition}

\newif\ifNotUse  
 \NotUsetrue
 
 \newif\ifNotUse  
 \NotUsetrue


\usepackage{calrsfs}
\DeclareMathAlphabet{\pazocal}{OMS}{zplm}{m}{n}
\usepackage{amsfonts,amssymb} 
\usepackage{bbm}

\def\ANetmath{\mathbb{A}}
\def\BNetmath{\mathbb{B}}

\def\TWinmath{\mathcal{T}}

\def\ANet{$\ANetmath\ $}
\def\BNet{$\BNetmath\ $}

\def\aipmath{\mathbbm{a}}
\def\bipmath{\mathbbm{b}}

\def\OPmath#1#2{\mathbb{S}^{\BNetmath}_{#1,#2}}

\def\IPLeftPartmath#1{\mathcal{L}_{#1}}
\def\IPLeftPart#1{$\IPLeftPartmath{#1}$}

\def\SubnodeNummath{\mathfrak{n}}

\def\SubnodeIdxmath{\mathfrak{l}}

\def\SubnodeObjmath#1{\mathcal{O}_{#1}}
\def\SubnodeObj#1{$\SubnodeObjmath{#1}$}

\def\IPaironlymath{\mathbb{S}^{pair}}
\def\IPairmath#1#2{\IPaironlymath_{#1,#2}}   
\def\IPair#1#2{$\IPairmath{#1}{#2}$}

\def\RECaryNlgtmath{r}

\def\RECrowNlgtmath{\textit{u}}

\def\RECcolNlgtmath#1{\textit{v}_{#1}}
\def\RECcolNlgt#1{$\RECcolNlgtmath{#1}$}

\def\RECobjonlymath{\mathbb{R}}
\def\RECobjmath#1{\RECobjonlymath^{#1}}
\def\RECobj#1{$\RECobjmath{#1}$}

\def\REidxINithRowmath#1#2{\mathcal{I}_{#1}^{#2}}
\def\REidxINithRow#1#2{$\REidxINithRowmath{#1}{#2}$}

\def\REidxStartBitmath#1{\mathfrak{b}_{#1}}
\def\REidxStartBit#1{$\REidxStartBitmath{#1}$}

\def\REaryObjmath#1{\mathcal{A}_{#1}}
\def\REaryObj#1{$\REaryObjmath{#1}$}

\def\REobjInRECmath#1#2{\RECobjonlymath_{#1}^{#2}}
\def\REcandTupleObjmath#1{\mathfrak{C}_{#1}}
\def\REcandTupleObjmath{\mathfrak{C}}

\def\REcandIdxObjmath#1#2{\mathfrak{c}_{#1}^{#2}}

\def\LEArowNmath{\hat{u}}

\def\LEAcolNmath{\hat{v}}

\def\LEAobjmath{\mathbb{L}}
\def\LEAofOneNodemath#1{\LEAobjmath^{#1}}
\def\LEAofOneNode#1{$\LEAofOneNodemath{#1}$}
\def\LEinLEAofOneNodemath#1#2#3{\LEAofOneNodemath{#3}_{#1,#2}}
\def\LEinLEAofOneNode#1#2#3{$\LEinLEAofOneNodemath{#1}{#2}{#3}$}

\def\LEAhashFunctionNoParmath#1{\mathbbm{h}_{#1}^{LEA}}

\def\LEAhashFunctionmath#1#2{\mathbbm{h}_{#1}^{LEA}(#2)}
\def\LEAhashFunction#1#2{$\LEAhashFunctionmath{#1}{#2}$}

\def\LEophostHashmath#1{\mathbbm{h}^{LE}(#1)}
\def\LEophostHash#1{$\LEophostHash#1$}
\def\REophostrandHashmath#1{\mathbbm{h}^{rand}(#1)}
\def\REophostbitHashmath#1{\mathbbm{h}^{RE}(#1)}

\def\LoadFactormath{\mathfrak{L}}

\def\LEbitsSetmath{\mathbb{C}}

\def\CadIPobjmath{\mathfrak{q}}    
\def\CadIPobjSetmath{\mathbb{Q}}

\def\CIPnumbermath{\mathfrak{w}}

\def\LEobjonlymath{\mathfrak{B}}
\def\LEcipInSLEAmath#1#2#3{\LEobjonlymath_{#1}^{#2}(#3)}
\def\LEcipInGLEAmath#1#2{\LEobjonlymath_{#1}(#2)}
\def\ULEcipInGLEAmath#1{\overline{\overline{\LEobjonlymath(#1)}}}

\def\ULEcipInGLEAREADmath#1{\overline{\LEobjonlymath(#1)}}

\def\ULEcipInSNodemath#1#2{\overline{\LEobjonlymath^{#1}(#2)}}
\def\ULEcipsetInSNodemath#1{\overline{\LEobjonlymath^{#1}}}

\def\ULESetofcipInSNodemath#1{\overline{\LEobjonlymath^{#1}}}
\def\LEusedClusivemath#1{\LEobjonlymath(#1)}

\def\bitInArymath#1#2{\beta_{#1}^{#2}}
\def\bitArraySetmath{\beta}

\def\bipHashedValuemath{\tilde{\bipmath}}

\def\LSBmath#1{\mathcal{R}^{0}(#1)}

\def\greyBG#1{\colorbox[rgb]{0.8,0.8,0.8}{#1}}










\def\HashSetVP{$\HashSetVP\ $}

\begin{document}
\title{READ: a three-communicating-stage distributed super points detections algorithm} 

\author[jspl,seu_cs]{Jie Xu\corref{cor1}}
\ead{xujieip@163.com}

\cortext[cor1]{Corresponding author}
\address[jspl]{Jiangsu Police Institute}
\address[seu_cs]{School of Computer Science and Engineering, South East University, Nanjing, China}
\address[seu_ns]{School of Cyber Science and Engineering, South East University, Nanjing, China}

\begin{abstract}
A super point is a host that interacts with a far larger number of counterparts in the network over a period of time. Super point detection plays an important role in network research and application. With the increase of network scale, distributed super point detection has become a hot research topic. Compared with single-node super point detection algorithm, the difficulty of super point detection in multi-node distributed environment is how to reduce communication overhead. Therefore, this paper proposes a three-stage communication distributed super point detection algorithm: Rough Estimator based Asynchronous Distributed super point detection algorithm (READ). READ uses a lightweight estimator, the Rough Estimator (RE), which is fast in computation and takes less memory to generate candidate super point. At the same time, the Linear Estimator (LE) is used to accurately estimate the cardinality of each candidate super point, so as to detect the super point correctly. In READ, each node scans IP address pairs asynchronously. When reaching the time window boundary, READ starts three-stage communication to detect the super point. In this paper, we proof that the accuracy of READ in distributed environment is no less than that in the single node environment. Four groups of 10 Gb/s and 40 Gb/s real-world high-speed network traffic are used to test READ. The experimental results show that READ not only has higher accuracy in distributed environment, but also has less than 5$\%$ of communication burden compared with existing algorithms.
\end{abstract} 
 \maketitle
 
\begin{keyword}
super point detection \sep distributed computing \sep network measurement \sep network security
\end{keyword}
\section{Introduction} \label{seq-introduction}
The Internet is one of the most important infrastructures of the modern information society. With the rapid development of China's economy, the bandwidth of core network is increasing year by year. According to the latest statistics of China Internet Information Center (CNNIC), as of December 2018, China's international export bandwidth has reached 8,946,570 Mbps, with an annual growth rate of 22.2$\%$\cite{report:cnni:chinesenetreport:en}. It is a worldwide problem to manage such a large-scale network effectively and ensure its safe operation.

In the face of complex network environment, the monitoring and protection of backbone network is the most important and basic step\cite{thesis:zap:seu:2015:en}. Internet management under the condition of large data-level network traffic is a hot research subject, which can be carried out from different aspects at the industrial and academic levels. To pay more attention to some core hosts in the network is a way to improve the efficiency of network management\cite{ieeec2018:generalidsaccelerationforhighspeednetworks}.

The super point in the Internet is such a kind of core host\cite{hsd:streamingalgorithmfastdetectionsuperspreaders}. It is generally believed that a super point refers to a host that communicate with lots of other hosts. Super points play important roles in the network, such as servers, proxies, scanners\cite{instr:asurveyintrusiondetectiontechniquesincloud:chiragmodi}, hosts attacked by DDoS, etc. The detection and measurement of super points are important to network security and network management\cite{hsd:infcom:simpleadaptiveidentificationsuperspreadersflowsampling}.

With the increase of network size, large-scale networks usually contain multiple border entries and exits. How to detect the super point from multiple nodes is a new requirement for super point detection. Some existing algorithms, such as DCDS\cite{hsd:adatastreamingmethodmonitorhostconnectiondegreehighspeed}, VBFA\cite{hsd:detectionsuperpointsvectorbloomfilter} and CSE\cite{hsd:compactspreadestimatorsmallhighspeedmemory} and so on, can realize distributed super point detection by adding data merging process. However, in the distributed environment, DCDS, VBFA, CSE must send all the whole used memory, which is more than 300MB for a 10Gb/s network, to the main server. When detecting the super point, such a large data transmission between the sub-node and the global server will cause the peak traffic of network communication and increase the communication delay. How to reduce the communication overhead in distributed environment is a difficult problem in the research of distributed super point detection.

Super points account for only a small portion of all hosts. In theory, only the data related to the super point should be sent to the global server to complete the super point detection. Based on this idea, a distributed super point detection algorithm, asynchronous distributed algorithm based on rough estimator (READ), is proposed in this paper. READ uses a lightweight rough estimator (RE) to generate candidate super points. Because RE takes up less memory, each sub-node only needs to send a small amount of data to the global server to generate candidate super points. READ not only reduces the detection error rate, but also reduces the communication overhead by transferring data related to candidate super points to the global server. The main contributions of this paper are as follows:
\begin{itemize}
\item A method of generating candidate super point in distributed environment using lightweight estimators is proposed.
\item A distributed super point detection algorithm READ with low communication overhead is proposed.
\item It is proved theoretically that READ has lower error rate in distributed environment.
\item Using the real-world high-speed network traffic to evaluate the performance of READ.
\end{itemize}

In section \ref{sec-relatedWork}, we introduce the rough estimator and the linear estimator for estimating host’s cardinality, as well as the existing algorithms for super point detection. Section \ref{sec-DistributedSP-detecte-model} discuss the model and difficulty of distributed super point detection. Section \ref{sec-READalgorithm} introduces the operation principle of READ, and how READ works. Section \ref{sec-READunderSlidingWindow} introduces how to modify READ to work under sliding time window. Section \ref{sec-experiments} shows the experiment of READ with 10 Gb/s and 40 Gb/s real world network traffic, and analyses the detection accuracy of READ in distributed environment and the communication overhead between sub-nodes and the global server. Section \ref{sec-conclusion} summarizes READ.

\section{Related work} \label{sec-relatedWork}
Super point detection is a hotspot in the field of network research and management. For the sake of narrative convenience, this section first gives relevant definitions.
\subsection{Related definitions}
All of the super point detection algorithms are based on network traffic and belong to passive network measurement. The original data used in the algorithm is the IP address collected from the network. For network managers, the measuring place is usually located at the boundary of the managed network, as shown in Figure \ref{fig_ObservationNode_model}. The host in $\ANetmath$ communicates with those hosts in $\BNetmath$ through the boundary router. IP address pairs such as $<\aipmath, \bipmath>$ can be extracted from each packet passing through the border router, where $\aipmath \in \ANetmath$, $\bipmath \in \BNetmath$. For the host $\aipmath$ in $\ANetmath$, its cardinality is defined as follows:
\begin{definition}[Opposite host set / cardinality]
\label{def-oppositehostsAndCardinaity}
In time window $\TWinmath$, for a host $\aipmath \in \ANetmath$, the set of all hosts in $\BNetmath$ that communicating with it is called the opposite host set of $\aipmath$, and is denoted as $\OPmath{\aipmath}{\TWinmath}$. The size of $\OPmath{\aipmath}{\TWinmath}$ is called the cardinality of $\aipmath$, which is denoted as $|\OPmath{\aipmath}{\TWinmath}|$.
\end{definition}

The cardinality is one of the important network attribute\cite{trafficclas2015j:aclassorientedfeatureselection}, and it is also the only criteria of super point judgement.

\begin{definition}[Super point]
\label{def-superPoint}
In the time window $\TWinmath$, the host whose cardinality exceeds the specified threshold θ is called a super point.
\end{definition}

In this paper, without losing generality, it is assumed that the super point detection is only for $\ANetmath$. Threshold θ is set by the users according to different situations, such as detecting DDoS attacks, locating servers and so on.

Cardinality estimation is the basis of super point detection. In the next section, we will introduce the commonly used algorithm for cardinality estimating in super point detection.

\begin{figure}[!ht]
\centering
\includegraphics[width=0.47\textwidth]{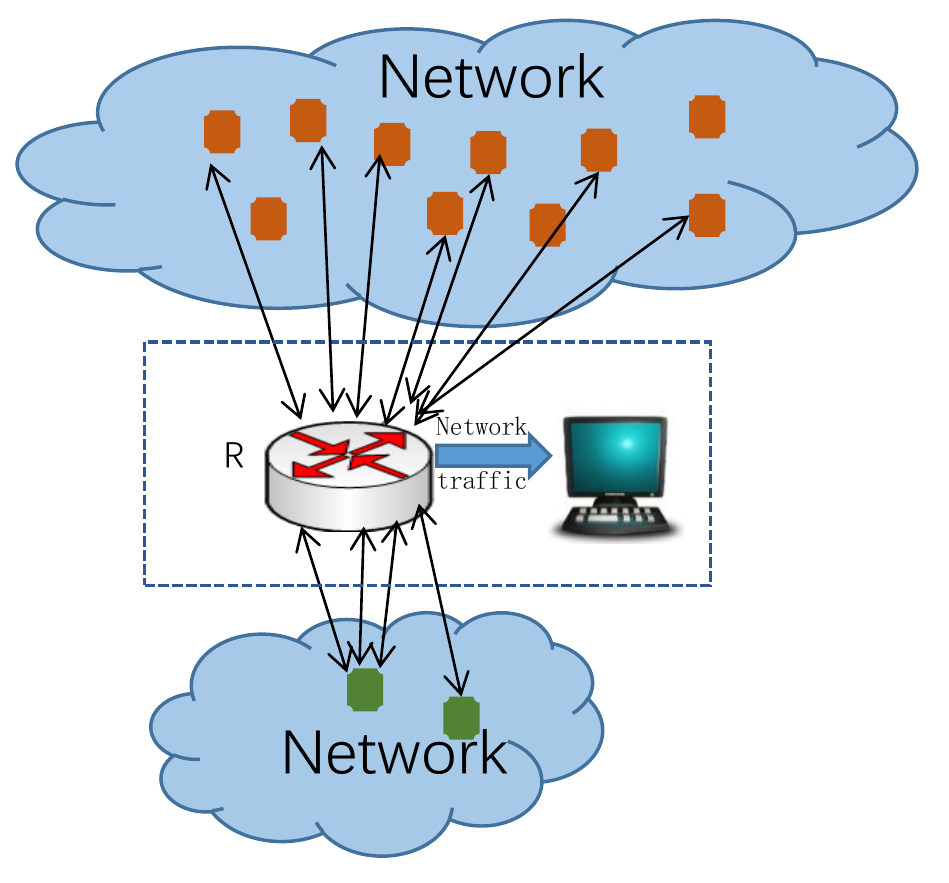}
\caption{The observation node on network boarder}
\label{fig_ObservationNode_model}
\end{figure}

\subsection{Cardinality Estimation}
Cardinality is an important attribute in network research\cite{tranieee2014:towardsmoreefficientcardinalityestimationforlarge-scalerfidsystems}. At the same time, the calculation of cardinality is also the basis of super point detection\cite{tranieee2013:contentionbasedestimationneighborcardinality}. Therefore, this sub section introduces the algorithm of host’s cardinality estimating\cite{confieee2015:towardsconstanttimecardinalityestimationforlargescalerfidsystems}.

There are many cardinality estimating algorithms, such as PCSA algorithm\cite{dc1983:probabilisticcounting}, HyperLogLog algorithm\cite{dc:hyperloglogtheanalysisofanearoptimalcardinalityestimationalgorithm}, Linear Estimator (LE) algorithm\cite{dc:alineartimeprobabilisticcountingdatabaseapp} and so on. LE algorithm is widely used in super point detection because of its high accuracy and simple operation.

Let $\LEbitsSetmath$ denote a set of bits and $|\LEbitsSetmath|$ denote the number of bits in $\LEbitsSetmath$. LE uses $\LEbitsSetmath$ to record and estimate the opposite hosts of $\aipmath$. Each bit in $\LEbitsSetmath$ is initially set to zero. For any opposite host $\bipmath$, LE maps it to a bit in $\LEbitsSetmath$ by using the hash function $\LEophostHashmath{\bipmath}$ and sets the bit to 1. At the end of time window $\TWinmath$, LE uses the following formula to estimate $|\OPmath{\aipmath}{\TWinmath}|$. Where $n_0$ denotes the number of bits in C with value of 0.

\begin{equation}
\label{eq-LE-cardinalityEstimation}
|\OPmath{\aipmath}{\TWinmath}|'=-|\LEbitsSetmath|*log(n_0/(|\LEbitsSetmath|))
\end{equation}

The estimation error of LE is related to $|\OPmath{\aipmath}{\TWinmath}|$ and the number of counter $|\LEbitsSetmath|$. Define the ratio of $|\OPmath{\aipmath}{\TWinmath}|$ to $|\LEbitsSetmath|$ as a load factor, marked $\LoadFactormath$. The estimated standard deviation of LE is $\sqrt{\frac{(e^\LoadFactormath-\LoadFactormath-1)}{\LEbitsSetmath}}$.

When $|\OPmath{\aipmath}{\TWinmath}|$ is determined, the larger $|\LEbitsSetmath|$ is, the higher the estimation accuracy of LE is. However, the larger $|\LEbitsSetmath|$, the more memory space LE occupies, and the longer time it takes to estimate the cardinality.

In order to compensate for the deficiency of LE, Jie et al.\cite{ispa2017:highspeednetworksuperpointsdetectionbasedslidingwindowgpu} proposed a lightweight rough estimator (RE). RE only takes 8 bits to determine whether $\aipmath$ is a candidate super point. At initialization, RE sets all 8 bits to 0. For each opposite host $\bipmath$ of $\aipmath$, RE maps $\bipmath$ to a random integer $\bipHashedValuemath$ between 0 and $2^{32}-1$ using hash function $\REophostrandHashmath{\bipmath}$, and then compares the lowest significant bit of $\bipHashedValuemath$ with a real number $\tau$.  The lowest significant bit is the position of the first bit ``1" starting from the right. For example, the binary formatter of integer 200 is ``11001000", its lowest significant bit is 3. Let $\LSBmath{x}$ denote the lowest significant bit of integer x. $\tau$ is used to determine whether update a bit. The definition of $\tau$ is as follows.

\begin{equation}
\label{eq-RE_tau_get}
\tau=log2(\theta / g)
\end{equation}

If $\LSBmath{\bipHashedValuemath}\geq \tau$, RE maps $\bipmath$ to one of 8 bits using a hash function and sets the bit to 1. Denote this hash function as $\REophostbitHashmath{\bipmath}$. When the number of bits with a value of 1 is greater than or equal to 3, RE determines $\bipmath$ as a candidate super point. As a lightweight estimator, RE can quickly determine candidate super point, but it cannot accurately estimate the cardinality. Jie et al.\cite{hpcc2018:srla:arealtimeslidingtimewindowsuperpointcardinalityestimationalgorithmforhighspeednetworkongpu} used RE as a preliminary screening tool to reduce the range of candidate super points, and combined with LE to realize real-time detection of super points under sliding time window. A detailed analysis of RE could be found in \cite{IEEEAccess2019:SRLA}.

\subsection{Super point detection}
From the introduction in the previous sub section, LE and RE can estimate the cardinality of a host and determine whether a host is a candidate super point. However, there are a large number of active IP\cite{j2017:unwisdomcrowdsaccuratelyspottingmaliciousipclustersusinnotsoaccurateipblacklists} in the actual network. At the beginning of the time window, it is not known which IP will become a super point. The task of the super point detection algorithm is to detect the super points from these IP based on the cardinality estimation algorithm. In this paper, we call the memory that used to record the opposite hosts’ information as master data structure.

A simple and straightforward method of super point detection is to record each host $\aipmath$ and its opposite IP. But this is unrealistic, because there are a lot of IP addresses in high-speed networks. Accurately recording each IP and its opposite hosts not only requires a lot of memory, but also a lot of memory access times\cite{j2012:anospfintegratedroutingstrategyqosawareenergysavingipbackbonenetworks}. Therefore, the estimation-based super point detection algorithms using fixed amount of memory have attracted wide attention, and a large number of super points detection algorithms have emerged, such as CBF\cite{hsd:linespeedaccuratesuperspreaderidentificationdynamicerrorcompensation}, DCDS\cite{hsd:adatastreamingmethodmonitorhostconnectiondegreehighspeed}, VBFA\cite{hsd:detectionsuperpointsvectorbloomfilter} and CSE\cite{hsd:compactspreadestimatorsmallhighspeedmemory}.

CBF\cite{hsd:linespeedaccuratesuperspreaderidentificationdynamicerrorcompensation} is a super point detection algorithm based on the principle of Bloom filter. It uses Bloom filter to remove duplicate IP address pairs, and uses a data structure derived from Bloom filter, called Counting Bloom filter, to record opposite IP information. The algorithm uses Bloom filter to avoid multiple updates of the master data structure by the same IP address pair, and improves the speed of the algorithm. When updating the counting Bloom filter, only increment some counters with 1, and no other complicated calculation is needed. Since each counter can be used by multiple hosts, the memory usage of the algorithm is low. Although Bloom filter can avoid multiple updates of CBF to an IP address pair, it may also cause omissions of some IP address pairs. In distributed environment, an IP address pair will appear on different nodes, which will be updated by different nodes many times. Therefore, CBF can not be applied to distributed environment.

DCDS\cite{hsd:adatastreamingmethodmonitorhostconnectiondegreehighspeed}, VBFA\cite{hsd:detectionsuperpointsvectorbloomfilter} and CSE\cite{hsd:compactspreadestimatorsmallhighspeedmemory} all use LE to estimate host’s cardinality. DCDS\cite{hsd:adatastreamingmethodmonitorhostconnectiondegreehighspeed} uses China Remainder Theorem (CRT)\cite{jnlsd2015:anewrobustchineseremaindertheoremwithimprovedperformanceinfrequencyestimationfromundersampledwaveforms} to restore candidate super point. However, when mapping $\aipmath$ to LE, DCDS needs to use CRT principle, which takes up more computing time and is not conducive to the improvement of algorithm speed. VBFA does not use computationally complex CRT to recover candidate super points, but maps $\aipmath$ to different LE according to the principle of  Bloom filter\cite{bf:anewanalysisoffalsepositiverate}. The length of LE array used to recover candidate super points in VBFA is fixed. As the number of host increases, each LE is used to estimate too many hosts’ cardinalities. At this time, the number of hot LE (whose cardinality is bigger than threshold) in LE array increases correspondingly. The number of hot LEs that need to be tested also increases, which increases the time to recover candidate super points. CSE uses virtual LE to estimate the number of counterparts. CSE assigns a virtual LE to each $\aipmath$. Each bit in virtual LE associates with a physical bit in the bit pool. CSE achieves bit-level sharing and makes more efficient use of memory. Each $\aipmath$ associates with only one virtual estimator, so only one physical bit needs to be updated when scanning each IP address pair, and memory access times are less than DCDS and VBFA. CSE cannot generate candidate super points after scanning all IP address pairs in a time window like DCDS and VBFA. Therefore, CSE saves all hosts in $\ANetmath$ as candidate super points, when scanning IP address pairs. It increases the number of candidate super points and the time used to estimate the cardinalities of candidate super points.

DCDS, VBFA and CSE can run in distributed environment. In distributed environment, DCDS and VBFA collect LE from all nodes, and merge these LE sets according to ``bit or" mode; CSE collects bit pools from all nodes, and merges these bit pools according to ``bit or" mode. Then the super points are detected according to the unioned LE set or bit pool. Although DCDS, VBFA and CSE can run in a distributed environment, they need to collect all LE or bit pools from each distributed node, which leads to low communication efficiency. This paper presents an algorithm that can realize distributed super point detection by collecting only fraction of LE sets, which reduces the communication in distributed environment. 
\subsection{Notations and symbols}
To facilitate reading, Table \ref{tbl_notationTbl} lists some commonly used symbols and abbreviations in this article. In Table \ref{tbl_notationTbl}, RE cube, RE array, LE array are data structures used in our algorithm, and they will be described in detail in section \ref{sec-READalgorithm}.

\begin{table*}[]
\caption{Notations and symbols used.}
\label{tbl_notationTbl}
\begin{tabular}{ll}
\hline
Notation           & Definition    \\ \hline
\ANet              & The network from which to detect super points.    \\
\BNet              & The network communicating with \ANet through edge routers. \\
$\aipmath$ or $\bipmath$       & An IP address in $\ANetmath$ or $\BNetmath$.        \\
$\TWinmath$        & A time window.\\
$\OPmath{\aipmath}{\TWinmath}$ & Set of opposite hosts of $\aipmath$ in $\TWinmath$.\\
$\SubnodeNummath$ & The number of distributed observation nodes.\\
$\SubnodeObjmath{\SubnodeIdxmath}$ & The $\SubnodeIdxmath$-th observation node.\\
$\IPairmath{\TWinmath}{\SubnodeIdxmath}$ & The stream of IP pair observed on $\SubnodeObjmath{\SubnodeIdxmath}$ in time window $\TWinmath$.\\
$\RECobjmath{\SubnodeIdxmath}$ & A RE cube in the $\SubnodeIdxmath$-th observation node.\\
$\RECaryNlgtmath$ & The number of right bits in $\aipmath$ used to locate a RE array in RE cube. \\
$\IPLeftPartmath{\aipmath}$ & The left $(32-\RECaryNlgtmath)$ bits of $\aipmath$.\\
$\RECrowNlgtmath$ & The number of row in a RE array.\\
$\RECcolNlgtmath{i}$ & The number of bits in $\IPLeftPartmath{\aipmath}$ which is used to locate a RE in the $i$-th row of a RE array.\\
$\LEAofOneNodemath{\SubnodeIdxmath}$ & A LE array in the $\SubnodeIdxmath$-th observation node.\\
$\LEArowNmath$ & The number of row of a LE array.\\
$\LEAcolNmath$ & The number of column of a LE array.\\

\hline

\end{tabular}
\end{table*}

\section{Distributed super point detection model and difficulty} \label{sec-DistributedSP-detecte-model}
A network connected to the Internet may have multiple border routers, as shown in Figure \ref{fig_ObservationNodeDistributed_model}. For example, a campus network access to multiple Internet Service Provider(ISP). Assuming that there is an observation node at each border router. Traffic can be observed and analyzed independently on each node. This section will discuss the algorithm of super point detection in distributed environment.
\begin{figure}[!ht]
\centering
\includegraphics[width=0.47\textwidth]{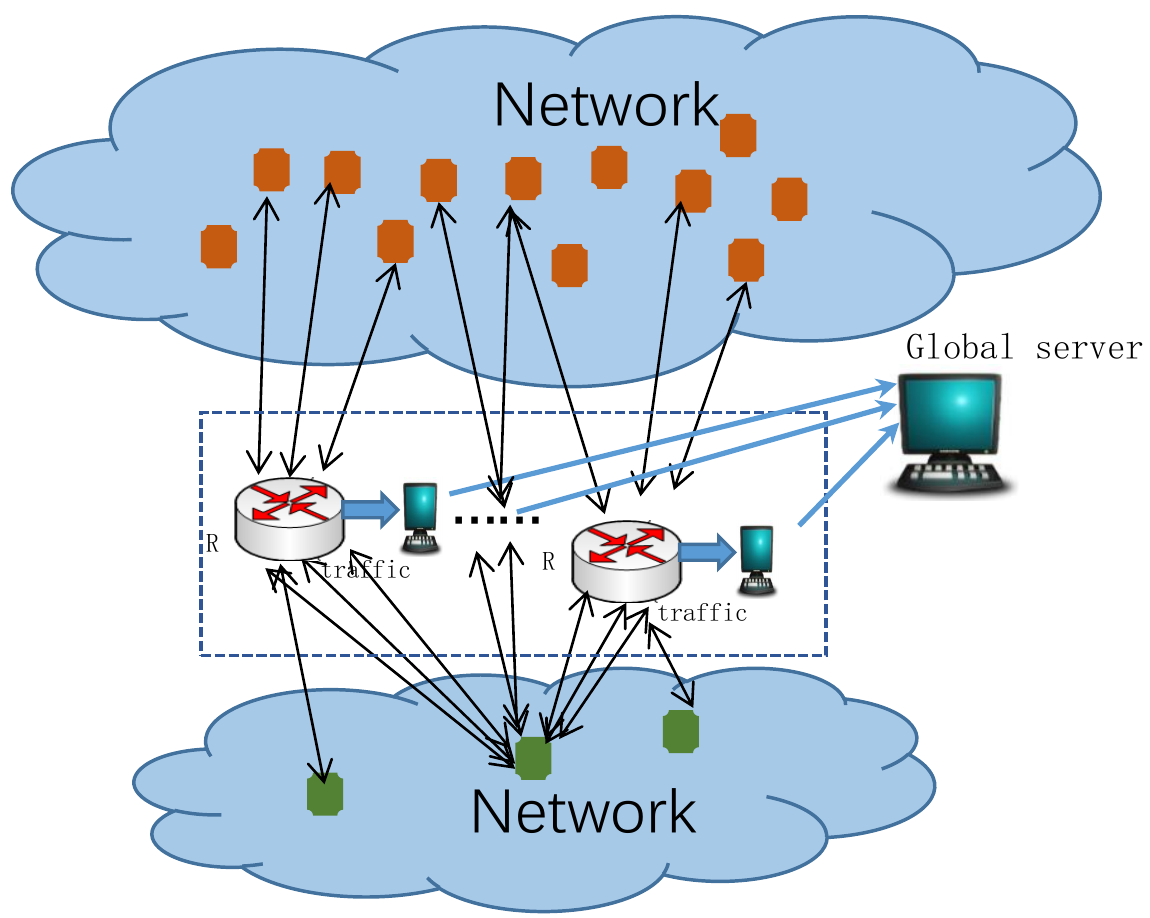}
\caption{ Super point detection in distributed environmentr}
\label{fig_ObservationNodeDistributed_model}
\end{figure}
\subsection{Detection model}
For a host $\aipmath$ in the network, it may interact with different opposite hosts through different border routers. At this time, only part of the traffic of $\aipmath$ can be observed at each observation node. Assuming that the host $\aipmath$ communicates with other networks in the Internet through $\SubnodeNummath$ border routers, only part of the traffic of $\aipmath$ is forwarded on each border router. At this time, the cardinality of $\aipmath$ observed at each border router may be less than the threshold, but the cardinality of $\aipmath$ observed from all observation nodes is larger than the threshold, which will lead to the omission of super points. Therefore, it is a meaningful work to detect the super point in distributed environment.

In the distributed environment, the global server collects data from all observation nodes and performs super point detection. The research of super point detection in distributed environment is to study which data the global server collects from the observation nodes and how to detect the global super points on the global server.
\subsection{Requirements and difficulties}
In order to avoid missing super point in distributed environment, it is necessary to detect them globally. A simple method is to send the IP address pairs extracted from each observation node to a global server that processes all data, and then detect the super point on global server. This method needs to transfer a large amount of data between the global server and observation nodes. Therefore, the method of sending all IP addresses to the global server and detecting the super point on the global server cannot process the high-speed network data in real time because of the long communication time.

Another method of super point detection in distributed environment is to run super point detection algorithms, such as DCDS, VBFA and CSE, at each observation node and then send only the master data structure to the global server for super point detection. Compared with the method of transferring all IP addresses to the global server, the method of transferring only the master data structure to the global server reduces the communication overhead between observation nodes and the global server.

But when using this method, all observation nodes need to transmit the master data structure to the global server. When the number of observation nodes increases, the total amount of data transferred between all observation nodes and the global server will also increase. Moreover, the size of the master data structure is related to the error rate of the algorithm: the larger the master data structure, the lower the error rate of the algorithm. Therefore, the communication overhead between the observation node and the master node cannot be reduced by reducing the size of the master data structure. In addition, the transmission of all master data structures will generate a large amount of burst traffic at the end of the time window, which will increase the network burden.

How to avoid sending all master data structures to the global server and reduce the communication between observation nodes and the global server is a difficult problem in distributed environment.
\subsection{Solution of this paper}
If only part of the cardinality estimation structure at the observation node is sufficient to detect the global super point, then there is no need to transfer all of them between the observation node and the global server, which can further reduce the communication overhead. Based on this idea, this paper proposes a low communication overhead distributed super point detection algorithm: Rough Estimator based Asynchronous Distributing Algorithm (READ).

In distributed environment, it is necessary to recover the global candidate super points at the end of the time window according to the information recorded at all observation nodes. DCDS and VBFA have the function of recovering candidate super points. But DCDS and VBFA have to use LE to recover candidate super points. Although LE has a high accuracy, it also occupies a high amount of memory, resulting in a large amount of communication between observation nodes and the global server.

RE not only runs fast, but also occupies less memory. If RE is used to generate candidate super points, a small amount of memory can be used to generate global candidate super points. The global server collects LE related to candidate super points from all observation nodes for estimating the cardinalities of candidate super points, and then completes super points detection without transmitting all cardinality estimation structure.
In the next section, we will describe how READ works.
\section{RE based distributed super points detection algorithm READ} \label{sec-READalgorithm}
In this section, we will introduce our low communication overhead distributed super points detection algorithm Rough Estimator based Asynchronous Distributed super points detection algorithm(READ).
\subsection{Principle of READ}
READ uses a data structure that can recover candidate super points to achieve distributed super points detection. It uses RE to recover candidate super points and LE to estimate cardinality of each candidate super point. Therefore, the master data structure of READ includes two parts: RE set and LE set. Scanning IP address pairs and estimating cardinalities are operations on RE and LE sets. REDA algorithm contains three main steps:
\begin{itemize}
\item Scan IP pair on each observation node. Each observation node scans each IP address pair passing through it and updates the RE and LE sets on it.
\item Generate candidate super points in global server. The global server collects RE sets from all observation nodes, merges these RE sets, and generates candidate super points using the merged RE sets.
\item Estimate cardinalities and filter super points. After the candidate super points are obtained, the global server collects LE related to each candidate super point from all observation nodes, and estimates the cardinalities of candidate super points based on these LE. 
\end{itemize}

According to the above analysis, in READ, the communication between observation nodes and the global server is divided into three stages:
\begin{itemize}
\item Each observation node sends RE set to global server;
\item The global server distributes candidate super points to each observation node;
\item Each observation node sends LE of every candidate super point to the global server;
\end{itemize}

For READ, the sum of the communication in the three stages above is the total communication between an observation node and the global server in a time window. The number of LEs sent by observation nodes to the global server equals to the number of candidate super points. Since the number of candidate super points is less than the number of LE in the master data structure, the amount of data sent by each observation node to the global server is less than the size of LE set.

\subsection{Scanning IP pair in distributed environment}
Distributed scanning IP address pairs is to scan the IP address pairs collected at each observation node. Let $\SubnodeObjmath{\SubnodeIdxmath}$ denote the $\SubnodeIdxmath$-th observation node and \IPair{\TWinmath,\SubnodeIdxmath} denote all IP address pairs in time window $\TWinmath$ on $\SubnodeObjmath{\SubnodeIdxmath}$. READ uses RE estimator and LE estimator to record IP information. Each observation node has the same cardinality estimation structure: the same number of RE and LE, and the same number of counters in RE and LE. The basic operation of $\SubnodeObjmath{\SubnodeIdxmath}$ when scanning IP address pairs is to update RE and LE.

READ uses RE to generate global candidate super points, and LE to estimate the cardinality of each global candidate super point. In distributed environment, because only part of the network traffic can be observed at each observation node, it is impossible to determine whether a host is a global candidate super point according to RE when scanning IP address pairs. In distributed environment, the algorithm of super point detection must be able to recover the global candidate super points directly, such as DCDS and VBFA.

In order to recover candidate super points, READ adopts a new data structure, Rough Estimator Cube (REC). REC is a three-dimensional data structure composed of RE, as shown in Figure \ref{fig_RE_cube_structure}.

The basic element of REC is RE. Several RE constitutes a one-dimensional RE vector (REV); the set of REV constitutes a two-dimensional RE array (RE Array, REA). The three-dimensional REC can be regarded as a set of REA, which contains $2^ {\RECaryNlgtmath}$ REA and $\RECaryNlgtmath$ is a positive integer less than 32. Each REA of REC has the same structure, that is, the REA contains the same number of REV, and the associating REV contains the same number of RE. Let $\RECrowNlgtmath$ denote the number of REV contained in REA and $2^{\RECcolNlgtmath{i}}$ denote the number of RE contained in the $i$th REV. Three indexes can be used to locate a RE in REC accurately.

All observation nodes have their own REC, and the structure of REC at different observation nodes is the same, that is, the $\RECaryNlgtmath$, $\RECrowNlgtmath$ , $2^{\RECcolNlgtmath{i}}$ of REC at different observation nodes are the same. When the IP address pair is scanned at the observation node, the REC at the observation node will be updated. Let $\RECobjmath{\SubnodeIdxmath}$ denote the REC on the observation node \SubnodeObj{\SubnodeIdxmath}, $\RECobjonlymath_{(i,j,k)}^\SubnodeIdxmath$ denote the $j$-th RE of the $i$-th REV on the $k$-th REA, where k is an integer between 0 and $2^{\RECaryNlgtmath}-1$, i is an integer between 0 and $\RECrowNlgtmath$ -1, and j is an integer between 0 and $2^{\RECcolNlgtmath{i}}$-1. 
In time window $\TWinmath$, for each IP address pair $<\aipmath, \bipmath >$ of \IPair{\TWinmath,\SubnodeIdxmath}, READ selects $\RECrowNlgtmath$ RE from \RECobj{\SubnodeIdxmath} according to $\aipmath$, and updates $\RECrowNlgtmath$ RE with $\bipmath$. How to map $\aipmath$ to $\RECrowNlgtmath$ RE in REC determines whether READ can recover global candidate super points from REC.

\begin{figure}[!ht]
\centering
\includegraphics[width=0.47\textwidth]{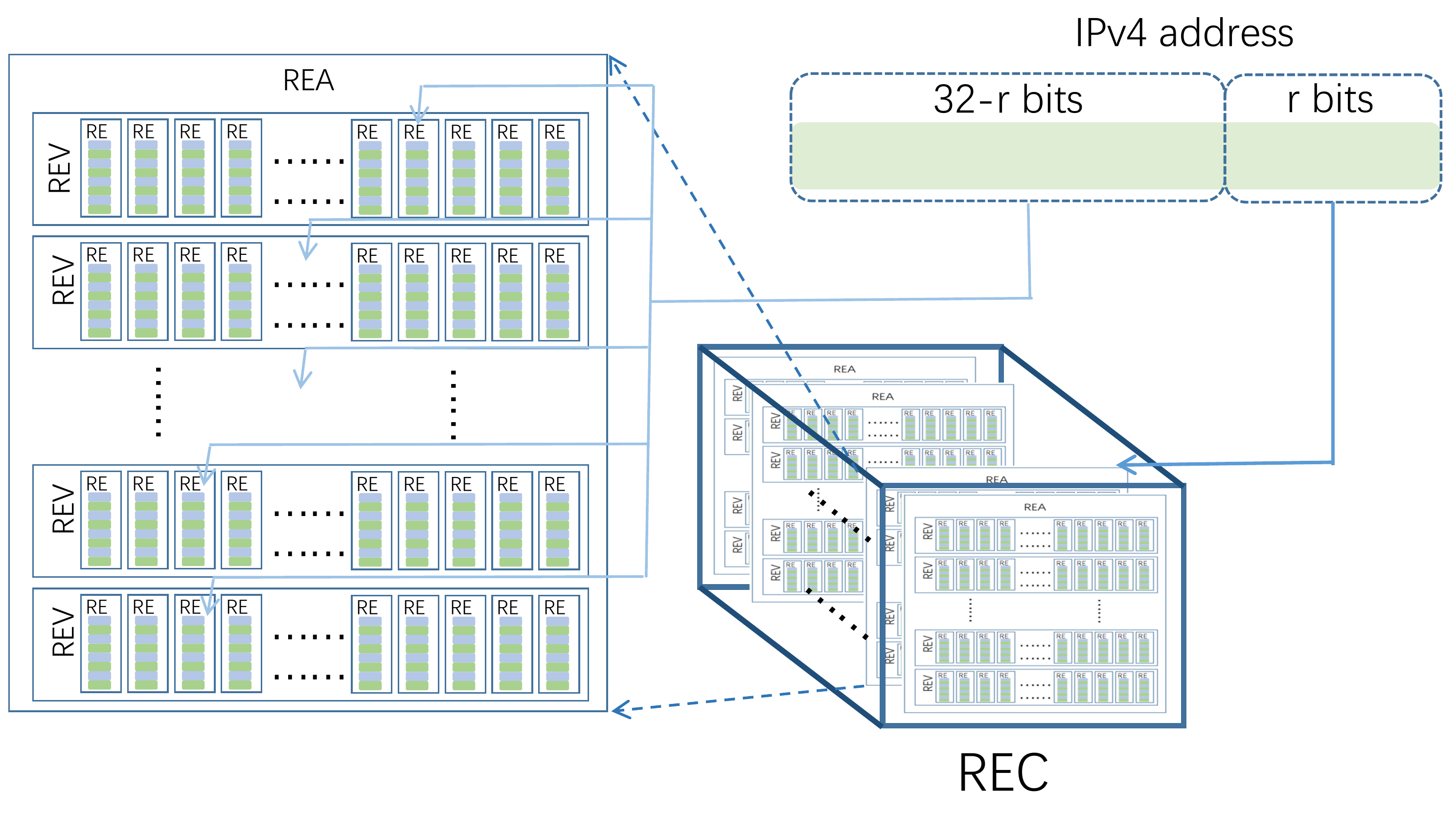}
\caption{Structure of RE cube}
\label{fig_RE_cube_structure}
\end{figure}

The $\RECrowNlgtmath$ RE associating with $\aipmath$ are located in the same REA. READ divides $\ANetmath$ into two parts: the first part is $\RECaryNlgtmath$ bits on the right (Right Part, RP), and the second part is 32-$\RECaryNlgtmath$ bits on the left (Left Part, LP).

READ selects a REA in the REC based on the IP of $\aipmath$. REC has $2^{\RECaryNlgtmath}$ REA, so the RP of $\aipmath$ can determine only one REA in the REC. READ divides $\ANetmath$ into $2^{\RECaryNlgtmath}$ subsets according to $\RECaryNlgtmath$ bits on the right side of the IP address. Each subset of $\ANetmath$ associates with a REA in the REC. During the operation of the algorithm, the number of RE in the REC is fixed, and each RE is used to record opposite hosts of multiple $\aipmath$. When $\ANetmath$ contains many IP addresses, by increasing $\RECaryNlgtmath$, the number of hosts sharing the same RE can be reduced.

The LP of $\aipmath$ is used to select $\RECrowNlgtmath$ RE in REA, i.e. one RE from each REV. Let \REidxINithRow{\aipmath}{i} denote the index of RE in the $i$-th REV, $0 \leq \REidxINithRowmath{\aipmath}{i} \leq 2^{\RECcolNlgtmath{i}}-1$. \REidxINithRow{\aipmath}{i} is an integer containing \RECcolNlgt{i} bits. Let \REidxINithRow{\aipmath}{i} [j] denote the $j$-th bit in \REidxINithRow{\aipmath}{i}, $0 \leq j \leq \RECcolNlgtmath{i}$. READ selects \RECcolNlgt{i} bits from the LP of $\aipmath$ as the value of \REidxINithRow{\aipmath}{i}. Let \IPLeftPart{\aipmath} denote the LP of $\aipmath$, \IPLeftPart{\aipmath}[i] denote the $i$-th bit of \IPLeftPart{\aipmath}, $0 \leq i \leq 32-$\RECaryNlgtmath$-1$. Each bit in \REidxINithRow{\aipmath}{i} associates with a bit in \IPLeftPart{\aipmath}, as shown in Figure \ref{fig_getREidx_from_LP}.

\begin{figure}[!ht]
\centering
\includegraphics[width=0.47\textwidth]{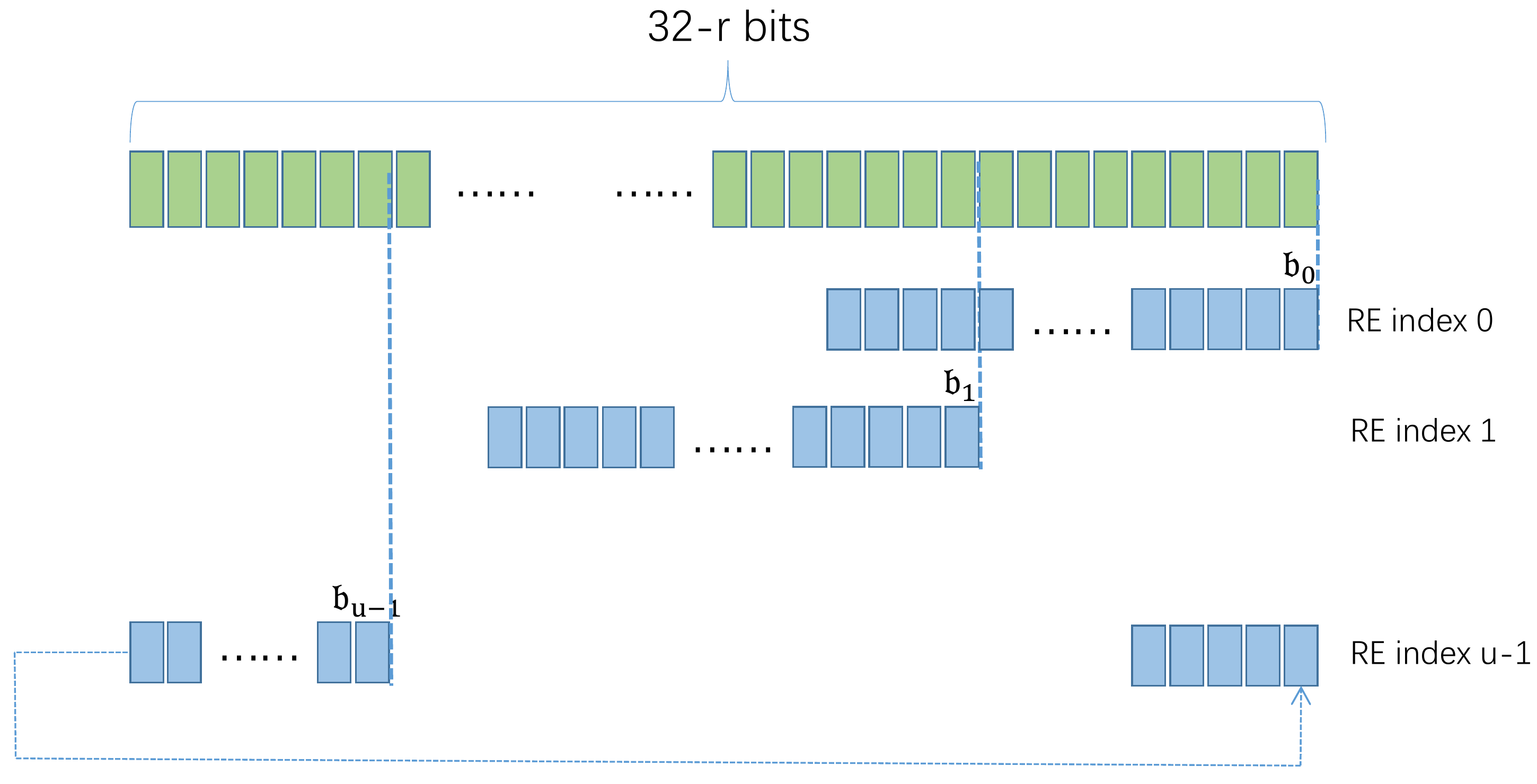}
\caption{Locate RE by left part of IP address}
\label{fig_getREidx_from_LP}
\end{figure}

When selecting bits from \IPLeftPart{\aipmath} as \REidxINithRow{\aipmath}{i}, READ first determines which bit in \IPLeftPart{\aipmath} is \REidxINithRow{\aipmath}{i} [0], and then calculates the other bits in \REidxINithRow{\aipmath}{i}. Let \REidxStartBit{i} denote the index of the 0th bit of \REidxINithRow{\aipmath}{i} in \IPLeftPart{\aipmath}, i.e. \REidxINithRow{\aipmath}{i} [0]=\IPLeftPart{\aipmath} [\REidxStartBit{i}]. Each bit of \REidxINithRow{\aipmath}{i} is calculated according to the following formula:

\begin{equation}
\label{eq-READgetREidxFromLPofIP}
\REidxINithRowmath{\aipmath}{i}[j]=\IPLeftPartmath{\aipmath} [(\REidxStartBitmath{i}+j)mod(32-\RECaryNlgtmath)],0\leq j \leq \RECcolNlgtmath{i}-1
\end{equation}

\REidxStartBit{i}（$0≤i≤ \RECrowNlgtmath-1$）is a parameter of READ, which is determined at the beginning of the algorithm. In order to recover the global candidate super point from REC, \REidxStartBit{i} meets the following conditions when setting:
\begin{itemize}
\item $\REidxStartBitmath{0}=0$
\item $\REidxStartBitmath{i}<\REidxStartBitmath{i+1}<31-\RECaryNlgtmath,i\in [0,\RECrowNlgtmath-2]$
\item $\REidxStartBitmath{i+1}<\REidxStartBitmath{i}+\RECcolNlgtmath{i}-1,i∈[0,\RECrowNlgtmath-2]$
\item $\REidxStartBitmath{\RECrowNlgtmath-1}+\RECcolNlgtmath{\RECrowNlgtmath-1}>31-\RECaryNlgtmath$
\end{itemize}

The above conditions ensure that each bit in \IPLeftPart{\aipmath} appears in at least one \REidxINithRow{\aipmath}{i}, and that there are the same bits between two adjacent \REidxINithRow{\aipmath}{i} (associating with the same bit in \IPLeftPart{\aipmath}). When restoring global candidate super points, READ extracts the associating bits of \IPLeftPart{\aipmath} from all \REidxINithRow{\aipmath}{i} to recover \IPLeftPart{\aipmath}, and reduces the number of global candidate super points by using the repeated bits between two adjacent \REidxINithRow{\aipmath}{i}.

RE estimator only determine whether the host is a global candidate super point, but cannot give an estimate of the cardinality. Therefore, READ uses LE to estimate the cardinality of each global candidate super points.

READ uses LE array of $\LEArowNmath$ rows and $\LEAcolNmath$ columns to record the opposite hosts of $\aipmath$, as shown in Figure \ref{fig_StructureOfLEArray}.

\begin{figure}[!ht]
\centering
\includegraphics[width=0.47\textwidth]{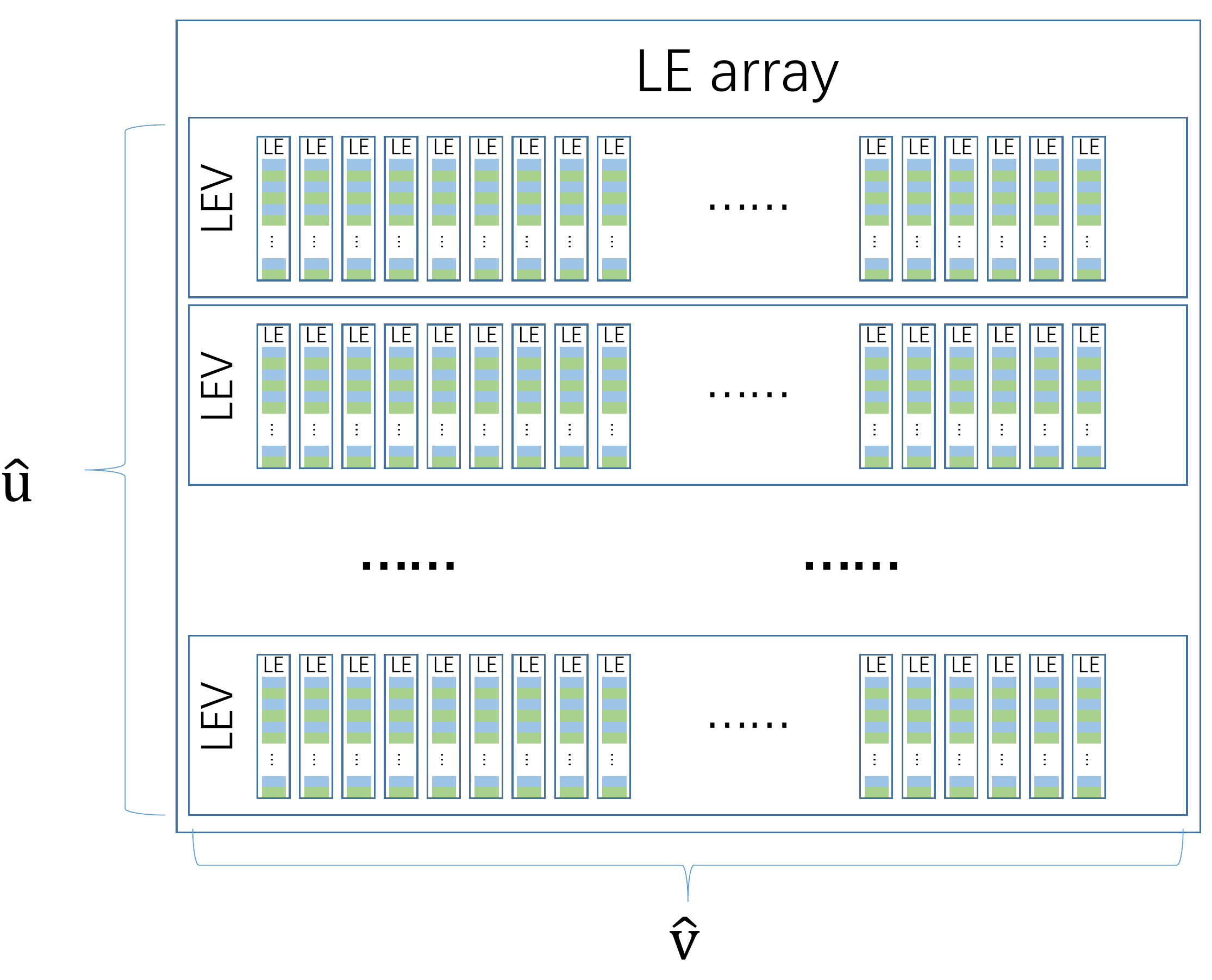}
\caption{Structure of LE array}
\label{fig_StructureOfLEArray}
\end{figure}

LE vector (LEV) contains $\LEArowNmath$ LE, and LEA contains $\RECrowNlgtmath$ LEV. Each observation node has a LEA, and the LEA at all observation nodes has the same structure. Let \LEAofOneNode{\SubnodeIdxmath} denote the LEA at the $\SubnodeIdxmath$-th observation node, and $\LEinLEAofOneNodemath{i}{j}{\SubnodeIdxmath}$ denote the $j$-th LE in the $i$-th LEV of \LEAofOneNode{\SubnodeIdxmath}.

For each $\aipmath$ in $\ANetmath$, READ selects one LE from each LEV of LEA to record the opposite hosts of $\aipmath$. READ maps $\aipmath$ to $\LEArowNmath$ LE in LEV with $\LEArowNmath$ random hash functions. 
READ uses the hash function \LEAhashFunction{i}{\aipmath} when mapping $\aipmath$ to a LE in the $i$-th LEV, where $\LEAhashFunctionmath{i}{\aipmath} \in [0,\LEAcolNmath-1], 0 \leq i \leq \LEArowNmath -1$. The observation node \SubnodeObj{\SubnodeIdxmath} not only updates \RECobj{\SubnodeIdxmath} but also \LEAofOneNode{\SubnodeIdxmath} when scanning \IPair{\TWinmath}{\SubnodeIdxmath}.

\begin{algorithm}         
\caption{scanIPair}  
\label{alg-READ_scanIPpair}  
\begin{algorithmic}[1]
\Require { $\RECaryNlgtmath$,$\RECrowNlgtmath$,
           $\{\RECcolNlgtmath{0}, \RECcolNlgtmath{0},\cdots, \RECcolNlgtmath{\RECrowNlgtmath-1} \}$, 
           $\{\REidxStartBitmath{0}, \REidxStartBitmath{1}, \cdots, \REidxStartBitmath{\RECrowNlgtmath-1}\}$, 
           $\LEArowNmath$ , $\LEAcolNmath$,
           $|\LEbitsSetmath|$,
           $\{\LEAhashFunctionNoParmath{0}, \LEAhashFunctionNoParmath{1},\cdots,\LEAhashFunctionNoParmath{\LEArowNmath-1}\}$,
           \IPair{\TWinmath}{\SubnodeIdxmath}} 
\Ensure{$\RECobjmath{\SubnodeIdxmath}$, \LEAofOneNode{\SubnodeIdxmath}} 
\State	Init \RECobj{\SubnodeIdxmath}  
\State	Init \LEAofOneNode{\SubnodeIdxmath}
\For {$<\aipmath,\bipmath> \in \IPairmath{\TWinmath}{\SubnodeIdxmath}$}
\State	$k \leftarrow$ right $\RECaryNlgtmath$ bits of $\aipmath$
\State	\IPLeftPart{\aipmath}$\leftarrow$ left 32-$\RECaryNlgtmath$ bits of $\aipmath$
	   \For {$i \in [0,$\RECrowNlgtmath$-1]$}
	     \State j=0
	       \For {$i_1 \in [0,\RECcolNlgtmath{i}-1$}
	         \State $j=j+ (\IPLeftPartmath{\aipmath}[(\REidxStartBitmath{i}+i_1 )mod(32-$\RECaryNlgtmath$)]<< i_1)$
	      \EndFor
	      \State Update $\REobjInRECmath{i,j,k}{l}$  with $\bipmath$ 
	    \EndFor
	    \For i∈[0,$\LEArowNmath$-1]
	        \State j= \LEAhashFunction{i}{\aipmath}
	        \State Update \LEinLEAofOneNode{i}{j}{\SubnodeIdxmath} with $\bipmath$ 
	    \EndFor
	 \EndFor
 \Return $\RECobjmath{\SubnodeIdxmath}$, \LEAofOneNode{\SubnodeIdxmath}

\end{algorithmic}
\end{algorithm}

The Algorithm \ref{alg-READ_scanIPpair} describes how READ scans IP address pairs in one observation node. READ first determines the size of REC and LEA according to the parameters, allocates the memory needed by REC and LEA, and initializes the counters of all RE and LE. Then start scanning each IP address pair in \IPair{\TWinmath}{\SubnodeIdxmath} and update REC and LEA. When scanning IP address pairs $<\aipmath, \bipmath>$, READ selects a REA from the REC by using $\RECaryNlgtmath$ bits on the right side of $\aipmath$, and  extracts $32-\RECaryNlgtmath$ bits on the left side of $\aipmath$ as \IPLeftPart{\aipmath}. Then the index of RE in each REV is determined according to \IPLeftPart{\aipmath}. Here, the index of RE refers to the location of RE in REV and takes the value between $[0,2^{\RECcolNlgtmath{i}}-1]$, where $2^{\RECcolNlgtmath{i}}$ is the number of RE contained in the REV. For the $i$-th REV, parameter \REidxStartBit{i} specifies the bits in \IPLeftPart{\aipmath} associating with the first bit of the RE index. 
After the index value of RE is obtained, the RE is updated with $\bipmath$. Compared with updating \RECobj{\SubnodeIdxmath}, updating \LEAofOneNode{\SubnodeIdxmath} is much simpler, because \LEAofOneNode{\SubnodeIdxmath} is only used to estimate the cardinality and does not need to restore the global candidate super point. 

After the observation node scans all IP address pairs in \IPair{\TWinmath}{\SubnodeIdxmath}, \RECobj{\SubnodeIdxmath} and \LEAofOneNode{\SubnodeIdxmath} record the information of opposite hosts. By collecting \RECobj{\SubnodeIdxmath} and \LEAofOneNode{\SubnodeIdxmath} from all observation nodes, the global candidate super points can be recovered and the cardinalities of candidate super points can be estimated.

The next section describes how READ recovers global candidate super points in a distributed environment.

\subsection{Generate candidate super points}
The master data structure at the observation node consists of two parts: REC and LEA. REC is used to recover global candidate super points, which has the advantage of less memory consumption; LEA is used to estimate cardinality, which has the advantage of high estimation accuracy. Each observation node can only observe part of the opposite hosts. In order to detect the super points accurately, it is necessary to collect the opposite hosts information recorded by each observation node on the global server. In this paper, we call the super points detected from IP address pairs of all observation nodes as global super points, and the generated global candidate super points as global candidate super points. When generating global candidate super points, only RECs are collected from each observation node, as shown in Figure \ref{fig_CollectRECFromObservatNodes}.

\begin{figure}[!ht]
\centering
\includegraphics[width=0.47\textwidth]{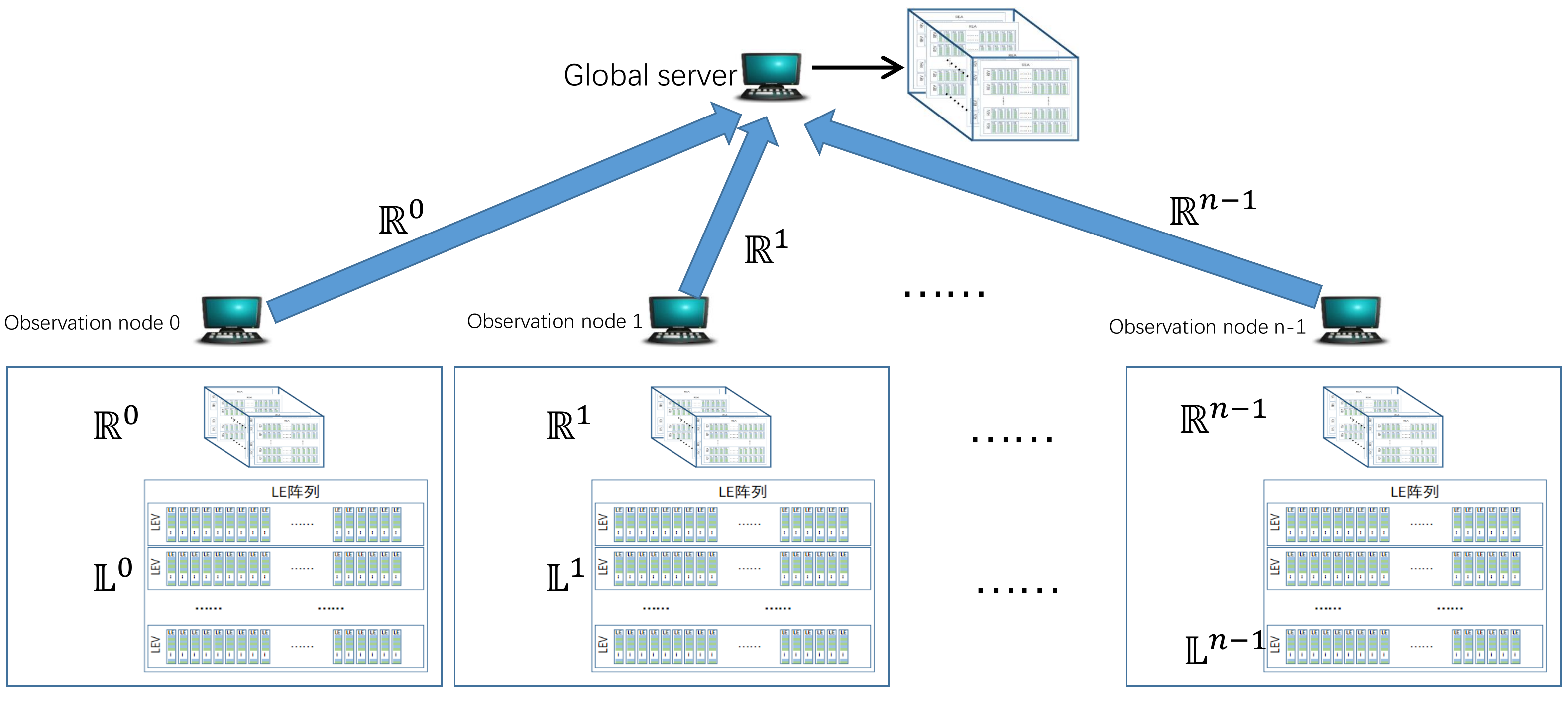}
\caption{Collect REC from observation nodes}
\label{fig_CollectRECFromObservatNodes}
\end{figure}

After each observation node has scanned all IP address pair in a time window, only the REC needs to be sent to the global server. The global server merges all the collected REC. The merging method is to merge the RE of different observation nodes in a ``bit or" manner. In this paper, the way of combining according to ``bit or" is called external merging, and the way of combining according to ``bit and" is called internal merging. External merger of RE is defined as follows:

\begin{definition}[RE Out merging]
\label{def-REOutMerging}
All bits of two RE generate a new RE according to the operation of ``bit or".
\end{definition}

In this paper, when the operand of the operator ``$\bigoplus$" is two RE or two LE, it means to out merge the two RE or LE; when the operand of the operator ``$\bigodot$" is two RE or two LE, it means to inner merge the two RE or LE.

The REC of all observation nodes are merged on the global server by outer merging, which ensures that any bit in the REC is still 1 in the merged global REC as long as it is set to 1 at any one observation node. Since RE uses bits to record the occurrence of opposite host, the global REC generated by outer merging contains the opposite information recorded by all observation nodes.

In this paper, we call the REC used to restore the global candidate super points on the global server as the global REC. The global REC has the same structure as the REC at all observation nodes. The global REC and the REC of all observation nodes are merged according to outer merging. There are two methods to get the global REC:
\begin{enumerate}[(1)]
\item Before merging the REC, the global server initializes a REC with the same structure as the REC at the observation nodes, and sets all bits in the initialized REC to 0. Then, the REC on the global server is merged with the REC on all observation nodes one by one, and the results are saved to the global REC. 
\item The global server takes the REC from the first observation node as the global REC, then merges the global REC with the REC from the remaining observation nodes, and saves the results to the global REC.
\end{enumerate}

Among the two methods for merging global REC, method (2) is less computational than method (1), because method (2) does not need to re-initialize REC. In this paper, method (2) is used to merge the REC of observation nodes into the global REC. Let $\RECobjonlymath$ denote the global REC, and $\RECobjonlymath_{i,j,k}$ denote the $j$-th RE of the $i$-th REV in the $k$-th REA of $\RECobjonlymath$. Assuming that the REC on \SubnodeObj{0} is first received one on the global server, Algorithm \ref{alg-OutMergingREC} describes the REC merging process on the global server.

\begin{algorithm}         
\caption{Out Merging REC}  
\label{alg-OutMergingREC}  
\begin{algorithmic}[1]
\Require { $\SubnodeNummath$, 
           $\{\RECobjmath{0}, \RECobjmath{1}, \cdots, \RECobjmath{n-1}\}$,
            $\RECaryNlgtmath$, $\RECrowNlgtmath$, $\{\RECcolNlgtmath{0}, \RECcolNlgtmath{1}, \cdots, \RECcolNlgtmath{\RECrowNlgtmath-1} \}$
            } 
\Ensure{$\RECobjonlymath$} 
	 \State $\RECobjonlymath \leftarrow \RECobjmath{0}$
	 \For {$\SubnodeIdxmath \in [1,\SubnodeNummath-1]$}
	    \For {$k \in [0,2^{\RECaryNlgtmath}-1]$}
	       \For {$ i \in [0,2^{\RECrowNlgtmath}-1]$}
	          \For {$j \in [0,2^{\RECcolNlgtmath{i}}-1]$}
	            \State $\RECobjonlymath_{i,j,k}\leftarrow (\RECobjonlymath_{i,j,k} \bigoplus \REobjInRECmath{i,j,k}{\SubnodeIdxmath} )$
	          \EndFor
	        \EndFor
	     \EndFor
	  \EndFor
\State Return $\RECobjonlymath$

\end{algorithmic}
\end{algorithm}

The first line of Algorithm \ref{alg-OutMergingREC} takes the received $\RECobjmath{0}$ as the global REC after the first merge, and then merges the remaining $\SubnodeNummath-1$ observation nodes into the global REC. After merging the REC at all observation nodes, algorithm \ref{alg-OutMergingREC} outputs the global REC.

READ recovers the global candidate super points from each REA of the global REC in turn. For the k-th REA of the global REC (denoted as \REaryObj{k}), READ calculates the global candidate super points in it by the following two steps:

\begin{enumerate}[Step (1)]
\item Find out all RE in \REaryObj{k} whose estimating cardinality is greater than the threshold.
\item From the candidate RE, 32-$\RECaryNlgtmath$ bits on the left of the candidate super point are recovered, and then concatenate with the right $\RECaryNlgtmath$ bits represented by k to get the complete global candidate super point.
\end{enumerate}
 
The above Step (1) only need to scan all RE in \REaryObj{k} once to get candidate RE. 
Let $\REcandTupleObjmath{i}=\{\REcandIdxObjmath{0}{i},\REcandIdxObjmath{1}{i},\REcandIdxObjmath{2}{i},\cdots\}$ represents the index of candidate RE in the $i$-th REV of \REaryObj{k}. Equation \ref{eq-READgetREidxFromLPofIP}  shows that the index of candidate RE in $\REcandTupleObjmath{i}$ comes from the bits of certain IP address. At the same time, as can be seen from Figure \ref{fig_getREidx_from_LP}, if the two indexes $\REcandIdxObjmath{x}{i}$ and $\REcandIdxObjmath{y}{((i+1)mod(\RECrowNlgtmath))}$ of two adjacent row, $i$ and $(i+1)mod(\RECrowNlgtmath)$ are from the same IP address, then  they have $\REidxStartBitmath{i}+\RECcolNlgtmath{i}-\REidxStartBitmath{((i+1)mod(\RECrowNlgtmath))}$ bits are the same. 
Conversely, if the left $\REidxStartBitmath{i}+\RECcolNlgtmath{i}-\REidxStartBitmath{((i+1)mod(\RECrowNlgtmath))}$ bits of $\REcandIdxObjmath{x}{i}$ are different from the right $\REidxStartBitmath{i}+\RECcolNlgtmath{i}-\REidxStartBitmath((i+1)mod(\RECrowNlgtmath))$ bits of $\REcandIdxObjmath{y}{((i+1)mod(\RECrowNlgtmath))}$, then $\REcandIdxObjmath{x}{i}$ and $\REcandIdxObjmath{y}{((i+1)mod(\RECrowNlgtmath))}$ certainly do not come from the same IP address. When the $\RECrowNlgtmath$ RE indexes comes from the same IP address, the $\RECrowNlgtmath$ RE indexes are called a candidate RE tuple. Inner merge these $\RECrowNlgtmath$ RE in a candidate RE tuple. If the estimated value of the inner merged RE still exceeds the threshold, the candidate RE tuple come from a global candidate hyper point.

When the candidate RE tuple comes from a global candidate super point, the candidate RE tuple can recover 32-$\RECaryNlgtmath$ bits to the left part of the global super point. From the setting requirement of parameter \REidxStartBit{i}, if the RE indexes in a candidate RE tuple comes from the same IP address $\aipmath$, any bit of \IPLeftPart{\aipmath} will appear at least once in the $\RECrowNlgtmath$ different candidate RE indexes. Therefore, 32-$\RECaryNlgtmath$ bits of \IPLeftPart{\aipmath} can be recovered from the candidate RE tuple. Then, a global candidate super point is obtained by concatenation with $k$, i.e. $(\IPLeftPartmath{\aipmath}<<\RECaryNlgtmath)+k$. 

Depth traversal can be used to calculate all candidate RE tuples from $\REcandTupleObjmath{i}$. 
For example, suppose that the parameters of REC are set to $\RECaryNlgtmath$ = 2, $\RECrowNlgtmath$ = 3, $\RECcolNlgtmath{0}=\RECcolNlgtmath{1}=\RECcolNlgtmath{2}=14$, \REidxStartBit{0}=0, \REidxStartBit{1}=10, \REidxStartBit{2}=20, 
the candidate RE indexes of $\REaryObjmath{2}$ 
is $\REcandTupleObjmath{0}=\{\REcandIdxObjmath{0}{0},\REcandIdxObjmath{1}{0},\REcandIdxObjmath{2}{0}\}, \REcandTupleObjmath{1}=\{\REcandIdxObjmath{0}{1},\REcandIdxObjmath{1}{1}\}$, $\REcandTupleObjmath{2}=\{\REcandIdxObjmath{0}{2},\REcandIdxObjmath{1}{2},\REcandIdxObjmath{2}{2}\}$. The number values of some candidate RE are as follows:

\begin{itemize}
\item $\REcandIdxObjmath{0}{0}=\greyBG{1100}010101\greyBG{0101}$
\item $\REcandIdxObjmath{0}{1}=\greyBG{1100}011001\greyBG{0101}$,\\
$\REcandIdxObjmath{1}{1}=\greyBG{1110}010001\greyBG{1100}$
\item $\REcandIdxObjmath{0}{2}=\greyBG{1001}011101\greyBG{1110}$,\\
 $\REcandIdxObjmath{1}{2}=\greyBG{0101}000101\greyBG{1110}$
\end{itemize}

In the above example, \REidxStartBit{i}+\RECcolNlgt{i}-\REidxStartBit{i+1}=4, that is, the candidate RE indexes in the two adjacent $\REcandTupleObjmath{i}$ determines whether it comes from the same IP address by the four bits on the left and the four bits on the right (the gray part in the RE index). When the candidate RE tuple is calculated by depth-first method, the candidate RE tuple is empty at the beginning, and then the first RE number is $\REcandIdxObjmath{0}{0}$. Test whether $\REcandIdxObjmath{0}{0}$ and $\REcandIdxObjmath{0}{1}$ come from the same IP address, as shown in Figure \ref{fig_exampleOfRestoringLP}.
\begin{figure}[!ht]
\centering
\includegraphics[width=0.47\textwidth]{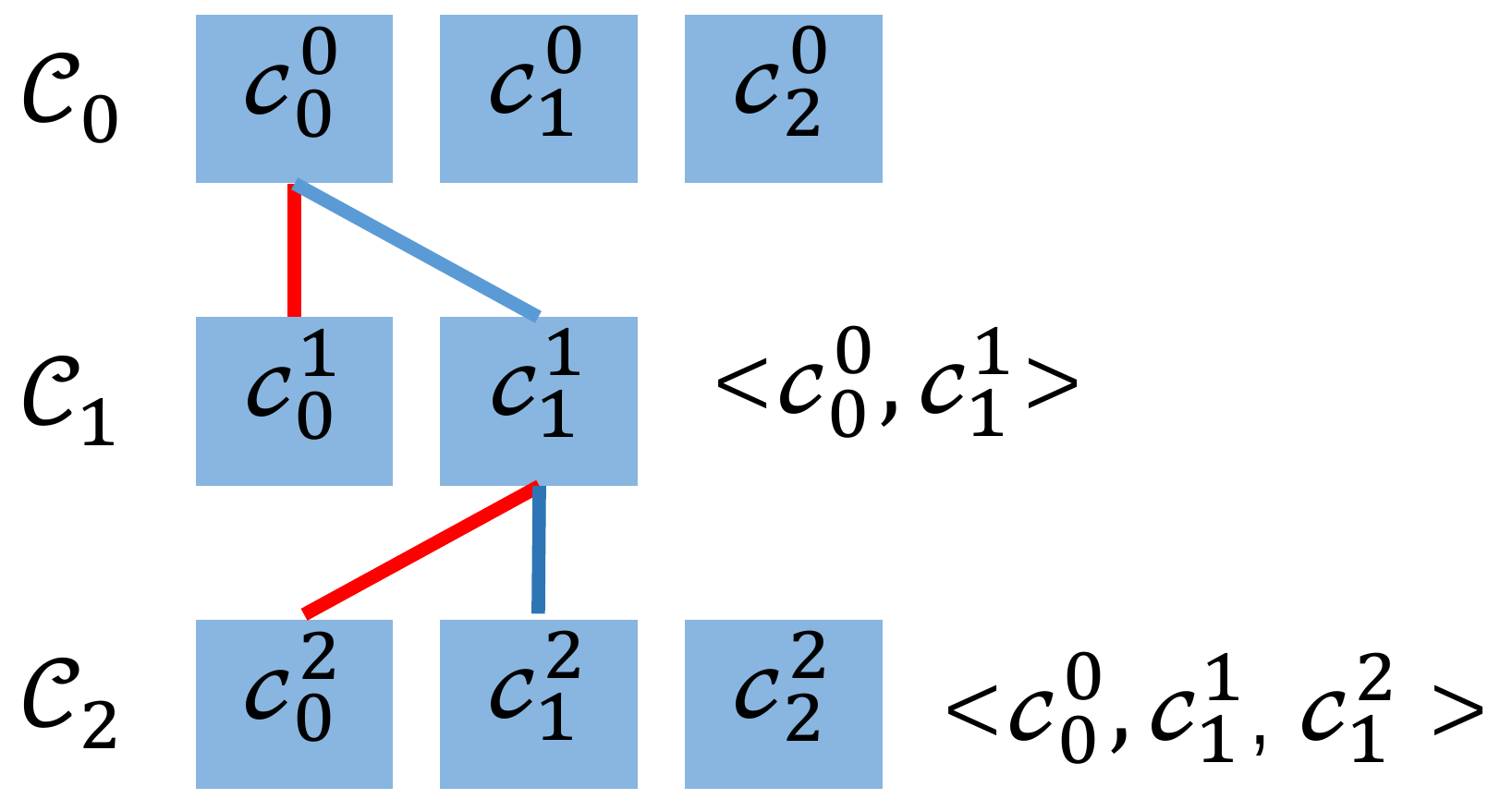}
\caption{Example of restoring LP with depth-first method}
\label{fig_exampleOfRestoringLP}
\end{figure}

The four bits on the left of $\REcandIdxObjmath{0}{0}$ are different from the four bits on the right of $\REcandIdxObjmath{0}{1}$, so $\REcandIdxObjmath{0}{0}$ and $\REcandIdxObjmath{0}{1}$ come from different IP addresses. Then test $\REcandIdxObjmath{0}{0}$ and $\REcandIdxObjmath{1}{1}$. The four bits on the left side of $\REcandIdxObjmath{0}{0}$ are the same as the four bits on the right side of $\REcandIdxObjmath{1}{1}$, so $\REcandIdxObjmath{1}{1}$ is added to the candidate RE tuple. Then find the RE index from $\REcandTupleObjmath{2}$ which come from the same IP address with $\REcandIdxObjmath{1}{1}$. In $\REcandTupleObjmath{2}$, the four bits on the right side of $\REcandIdxObjmath{0}{2}$ are the same as the four bits on the left side of $\REcandIdxObjmath{1}{1}$, but the four bits on the left side of $\REcandIdxObjmath{0}{2}$ are not equal to the four bits on the right side of $\REcandIdxObjmath{0}{0}$, so $\REcandIdxObjmath{0}{2}$ cannot form a candidate RE tuple with $\REcandIdxObjmath{0}{0}$ and $\REcandIdxObjmath{1}{1}$. In $\REcandTupleObjmath{2}$, not only are the four bits on the right side the same as the four bits on the left side of $\REcandIdxObjmath{1}{1}$, but also the four bits on the left side of $\REcandIdxObjmath{1}{2}$ the same as the four bits on the right side of $\REcandIdxObjmath{0}{0}$. Therefore, $<\REcandIdxObjmath{0}{0}$,$\REcandIdxObjmath{1}{1}$,$\REcandIdxObjmath{1}{2}>$ constitutes a candidate RE tuple. 

From the values of $\REcandIdxObjmath{0}{0}$, $\REcandIdxObjmath{1}{1}$ and $\REcandIdxObjmath{1}{2}$, we can see that the RE associating with the candidate RE tuple is $\REobjInRECmath{0,12629,2}{\SubnodeIdxmath}$, $\REobjInRECmath{1,14620,2}{\SubnodeIdxmath}$ , $\REobjInRECmath{2,5214,2}{\SubnodeIdxmath}$. If the cardinality estimated from the inner merge RE, $\REobjInRECmath{0,12629,2}{\SubnodeIdxmath} \bigodot \REobjInRECmath{1,14620,2}{\SubnodeIdxmath} \bigodot \REobjInRECmath{2,5214,2}{\SubnodeIdxmath}$, still over the threshold, 30 bits of the left part of $\aipmath$ can be recovered from $<\REcandIdxObjmath{0}{0},\REcandIdxObjmath{1}{1},\REcandIdxObjmath{1}{2}>$: ``000101\greyBG{1110}010001\greyBG{1100}010101\greyBG{0101}". 
$\REaryObjmath{2}$ is the 2-th REA in REC. The associating binary format is ``10". Thus, the global candidate super point is ``000101\greyBG{1110}010001\greyBG{1100}010101\greyBG{0101}10". 

All REA in global REC are processed in the above way. Because the number of RE counters is small (for IPv4 address, there are only 8 counters), so it is faster to scan REA and calculate candidate RE number. And each RE only takes up one byte of space, so REC takes up less memory and reduces the amount of data transmitted between observation nodes and the global server. However, the cardinalities of the global candidate super points cannot be estimated by RE. Estimating the cardinality requires the use of the opposite host information stored in LEA. The next section describes how to collect the opposite host information stored in LEA from the observation nodes, estimate the cardinalities of the global candidate super points, and filter out the super points.

\subsection{Estimate cardinalities of candidate super points}
The LEA at each observation node is used for estimating the cardinality of global candidate super points. A simple way is to send all LEAs at each observation node to the global server, and then merge all LEA of observation nodes on the global server in a ``bit or" manner to get the global LEA. 

In this paper, when the operand of ``$\sum$" is the LE or RE set, it means that all LE or RE in the set are merged by outer merging method; when the operand of ``$\prod$" is the LE or RE set, it means that all LE or RE in the set are merged by inner merging method.

Merging LEA of all observation nodes on the global server in the way of outer merging is equivalent to sending IP address pairs directly to the global server to update the global LEA. Because LE outer merging guarantees that any bit in the global LEA will remain 1 as long as it is set to 1 at one or more observation nodes.

After the global LEA is generated, the cardinalities of global candidate super points can be estimated according to the global LEA. Let $\CadIPobjmath$ denote a global candidate super point, $\LEcipInSLEAmath{i}{\SubnodeIdxmath}{\CadIPobjmath}$ denote the LE of $\CadIPobjmath$ in the $i$-th LEV of the $\SubnodeIdxmath$-th observation node, i.e. $\LEcipInSLEAmath{i}{\SubnodeIdxmath}{\CadIPobjmath}=\LEinLEAofOneNodemath{i}{j}{\SubnodeIdxmath}$, $j=\LEAhashFunctionmath{i}{\CadIPobjmath}$. Using hash functions $\LEAhashFunctionmath{i}{\CadIPobjmath}$, we can find these LE used by $\CadIPobjmath$ from the global LEA. 

Let $\LEcipInGLEAmath{i}{\CadIPobjmath}$ denote the LE associating with $\CadIPobjmath$ in the first LEV of the global LEA. Since global LEA is obtained by combining LEA from all observation nodes, $\LEcipInGLEAmath{i}{\CadIPobjmath}=\sum_{l=0}^{n-1}\LEcipInSLEAmath{i}{\SubnodeIdxmath}{\CadIPobjmath}$ .
The $\LEArowNmath$ LE of $\CadIPobjmath$ on the global LEA are merged into $\ULEcipInGLEAmath{\CadIPobjmath}=\prod_{i=0}^{\LEArowNmath-1}{\LEcipInGLEAmath{i}{\CadIPobjmath}}$ . 
Let $|\ULEcipInGLEAmath{\CadIPobjmath}|$ denote the number of bits with value ``1" in $\ULEcipInGLEAmath{\CadIPobjmath}$. The cardinality of $\CadIPobjmath$ is estimated based on $\ULEcipInGLEAmath{\CadIPobjmath}$ by equation \ref{eq-LE-cardinalityEstimation}. If the estimated result is larger than the threshold, $\CadIPobjmath$ is reported as a super point. 

Although the above method avoids sending all IP addresses to the global server, it still needs to send the complete LEA to the global server. In order to improve the accuracy of cardinality estimating, the parameters of LEA are set to larger values. For example, when $\LEArowNmath=5$, $\LEAcolNmath=2^{15}$, $|\LEbitsSetmath|=2^{14}$, LEA is 320 MB in size. When estimate cardinalities, each observation node needs to send 320MB of data to the global server.

When estimating the cardinality of global candidate super point $\CadIPobjmath$, only $\ULEcipInGLEAmath{\CadIPobjmath}$ is needed.
Based on this principle, READ first sends the global candidate super points to each observation node from the global server, and then each observation node send these LE relating with candidate super points back to the global server, as shown in Figure \ref{fig_collectCandidateIPfromLE}.

\begin{figure}[!ht]
\centering
\includegraphics[width=0.47\textwidth]{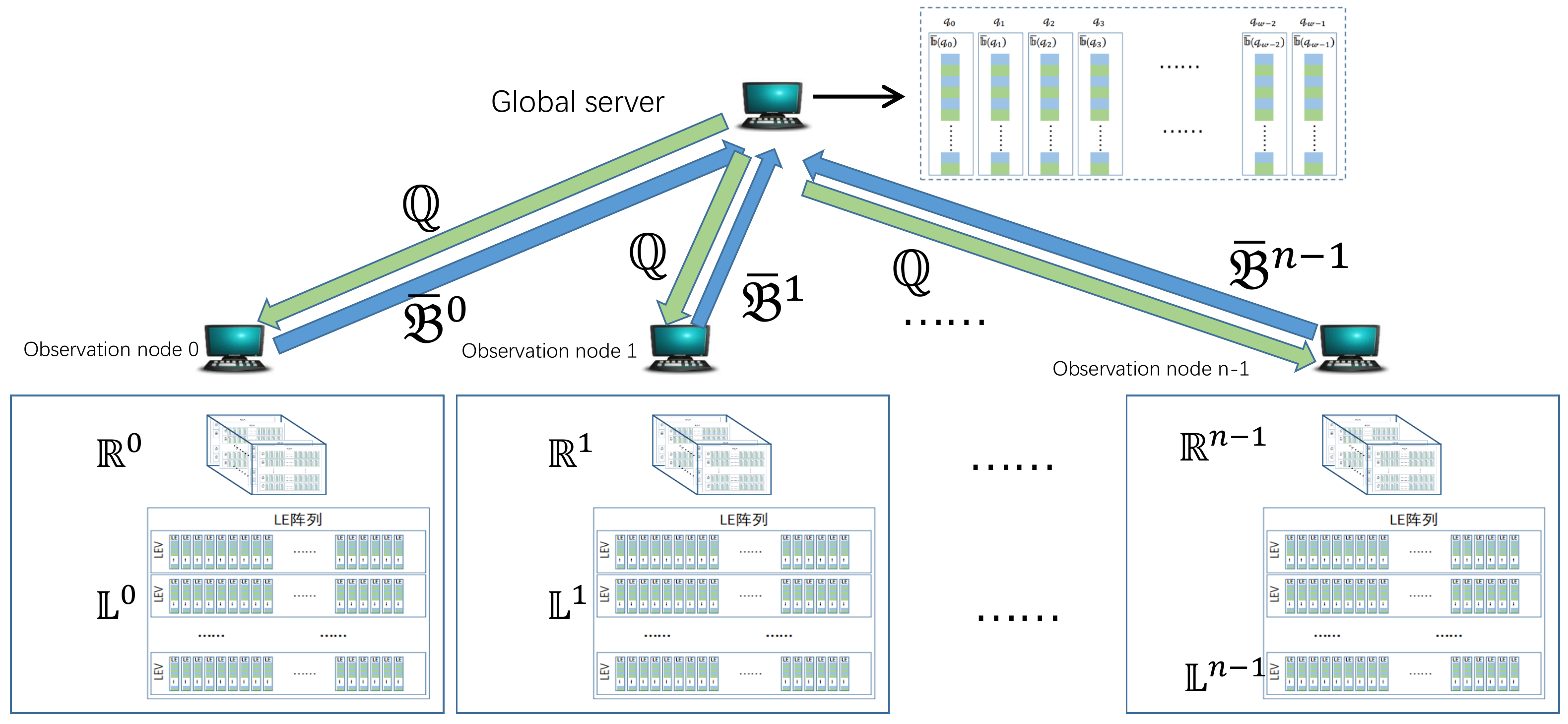}
\caption{ Super point detection in distributed environmentr}
\label{fig_collectCandidateIPfromLE}
\end{figure}

In Figure \ref{fig_collectCandidateIPfromLE}, $\CadIPobjSetmath=\{\CadIPobjmath_{0}, \CadIPobjmath_{1}, \CadIPobjmath_{2}, \cdots ,\CadIPobjmath_{\CIPnumbermath-1}\}$ denotes the set of global candidate super points, $\ULEcipsetInSNodemath{\SubnodeIdxmath}$ denotes the set of LE used to estimate cardinalities of global candidate super points in $\CadIPobjSetmath$ on the observation node $\SubnodeIdxmath$. 
For global candidate super point $\CadIPobjmath$, there are $\LEArowNmath$ LE associating with it, i.e. $\{\LEcipInSLEAmath{0}{\SubnodeIdxmath}{\CadIPobjmath},\LEcipInSLEAmath{1}{\SubnodeIdxmath}{\CadIPobjmath},\cdots,\LEcipInSLEAmath{\LEArowNmath-1}{\SubnodeIdxmath}{\CadIPobjmath}\}$.
READ does not send all of the $\LEArowNmath$ LE to the global server, but the result of internal merging , $\ULEcipInSNodemath{\SubnodeIdxmath}{\CadIPobjmath} =\prod_{i=0}^{\LEArowNmath-1}\LEcipInSLEAmath{i}{\SubnodeIdxmath}{\CadIPobjmath}$.
In Figure \ref{fig_collectCandidateIPfromLE}, $\ULESetofcipInSNodemath{\SubnodeIdxmath}=\{\ULEcipInSNodemath{\SubnodeIdxmath}{\CadIPobjmath_{0}},\ULEcipInSNodemath{\SubnodeIdxmath}{\CadIPobjmath_{1}},\ULEcipInSNodemath{\SubnodeIdxmath}{\CadIPobjmath_{2}},\cdots,\ULEcipInSNodemath{\SubnodeIdxmath}{\CadIPobjmath_{\CIPnumbermath -1}} \}$ is the LE set to be sent to the global server on the $\SubnodeIdxmath$-th observation node. 

On the global server, $\ULEcipInGLEAREADmath{\CadIPobjmath}=\sum_{\SubnodeIdxmath=0}^{\SubnodeNummath}\ULEcipInSNodemath{\SubnodeIdxmath}{\CadIPobjmath}$, which is used for estimating the cardinality of $\CadIPobjmath$, is obtained by outer merging all $\ULEcipInSNodemath{\SubnodeIdxmath}{\CadIPobjmath}$. Let $|\ULEcipInGLEAREADmath{\CadIPobjmath}|$ denote the number of bits with value ``1" in $\ULEcipInGLEAREADmath{\CadIPobjmath}$. Theorem \ref{thm-DistributedLE_close_toLE_cnt} shows that $\ULEcipInGLEAREADmath{\CadIPobjmath}$ can more accurately estimate the cardinality of $\CadIPobjmath$ than $\ULEcipInGLEAmath{\CadIPobjmath}$.

\begin{theorem}
\label{thm-DistributedLE_close_toLE_cnt}
For global candidate super point $\CadIPobjmath$, let $\OPmath{\TWinmath}{\CadIPobjmath}$ denote the set of opposite hosts of $\CadIPobjmath$ passing through all observation nodes in time window $\TWinmath$, $\LEusedClusivemath{\CadIPobjmath}$ denote a LE after scanning $\OPmath{\TWinmath}{\CadIPobjmath}$, and $|\LEusedClusivemath{\CadIPobjmath}|$ denote the number of bits with value ``1" in $\LEusedClusivemath{\CadIPobjmath}$. Then these bits with value ``1" in $\LEusedClusivemath{\CadIPobjmath}$ are still with value ``1" in $\ULEcipInGLEAREADmath{\CadIPobjmath}$ and $\ULEcipInGLEAmath{\CadIPobjmath}$. And $|\LEusedClusivemath{\CadIPobjmath}|\leq |\ULEcipInGLEAREADmath{\CadIPobjmath}|\leq |\ULEcipInGLEAmath{\CadIPobjmath}|$.
\end{theorem}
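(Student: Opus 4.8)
The plan is to prove both inequalities as inclusions of the sets of ``$1$'' bits, arguing one bit position at a time while tracking which opposite host forced a given bit to ``$1$''. Throughout, recall that ``$\sum$'' (equivalently $\bigoplus$) acts bitwise as OR and ``$\prod$'' (equivalently $\bigodot$) acts bitwise as AND, and that a single scan of an IP pair $<\CadIPobjmath,\bipmath>$ at an observation node sets one and the same bit, namely $\LEophostHashmath{\bipmath}$, in every one of the $\LEArowNmath$ local estimators $\LEcipInSLEAmath{0}{\SubnodeIdxmath}{\CadIPobjmath},\dots,\LEcipInSLEAmath{\LEArowNmath-1}{\SubnodeIdxmath}{\CadIPobjmath}$ to which $\CadIPobjmath$ is mapped on that node.

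First I would establish the left inequality $|\LEusedClusivemath{\CadIPobjmath}|\le|\ULEcipInGLEAREADmath{\CadIPobjmath}|$. Fix a bit position $b$ that is ``$1$'' in $\LEusedClusivemath{\CadIPobjmath}$. Since $\LEusedClusivemath{\CadIPobjmath}$ is obtained by scanning exactly the hosts of $\OPmath{\TWinmath}{\CadIPobjmath}$, there is an opposite host $\bipmath\in\OPmath{\TWinmath}{\CadIPobjmath}$ with $\LEophostHashmath{\bipmath}=b$. The pair $<\CadIPobjmath,\bipmath>$ is observed at (at least) one node $\SubnodeObjmath{\SubnodeIdxmath}$, and when it is scanned there, bit $b$ is set in $\LEcipInSLEAmath{i}{\SubnodeIdxmath}{\CadIPobjmath}$ for every $i\in[0,\LEArowNmath-1]$; hence bit $b$ is ``$1$'' in $\ULEcipInSNodemath{\SubnodeIdxmath}{\CadIPobjmath}=\prod_{i=0}^{\LEArowNmath-1}\LEcipInSLEAmath{i}{\SubnodeIdxmath}{\CadIPobjmath}$, and therefore ``$1$'' in $\ULEcipInGLEAREADmath{\CadIPobjmath}=\sum_{\SubnodeIdxmath}\ULEcipInSNodemath{\SubnodeIdxmath}{\CadIPobjmath}$. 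Contamination of a local estimator by other hosts that share it can only add ``$1$'' bits, so this is never undone, and the set of ``$1$'' bits of $\LEusedClusivemath{\CadIPobjmath}$ is contained in that of $\ULEcipInGLEAREADmath{\CadIPobjmath}$.

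Next I would establish the right inequality $|\ULEcipInGLEAREADmath{\CadIPobjmath}|\le|\ULEcipInGLEAmath{\CadIPobjmath}|$. Fix a bit position $b$ and let $x_{i,\SubnodeIdxmath}\in\{0,1\}$ be the value of bit $b$ in $\LEcipInSLEAmath{i}{\SubnodeIdxmath}{\CadIPobjmath}$. Unwinding the two merge orders, bit $b$ of $\ULEcipInGLEAREADmath{\CadIPobjmath}=\sum_{\SubnodeIdxmath}\prod_{i}\LEcipInSLEAmath{i}{\SubnodeIdxmath}{\CadIPobjmath}$ equals $\max_{\SubnodeIdxmath}\min_{i}x_{i,\SubnodeIdxmath}$, while bit $b$ of $\ULEcipInGLEAmath{\CadIPobjmath}=\prod_{i}\LEcipInGLEAmath{i}{\CadIPobjmath}=\prod_{i}\sum_{\SubnodeIdxmath}\LEcipInSLEAmath{i}{\SubnodeIdxmath}{\CadIPobjmath}$ equals $\min_{i}\max_{\SubnodeIdxmath}x_{i,\SubnodeIdxmath}$. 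The elementary max--min inequality $\max_{\SubnodeIdxmath}\min_{i}x_{i,\SubnodeIdxmath}\le\min_{i}\max_{\SubnodeIdxmath}x_{i,\SubnodeIdxmath}$ (from $\min_{i}x_{i,\SubnodeIdxmath_0}\le x_{i_0,\SubnodeIdxmath_0}\le\max_{\SubnodeIdxmath}x_{i_0,\SubnodeIdxmath}$ for all $\SubnodeIdxmath_0,i_0$, then maximizing over $\SubnodeIdxmath_0$ and minimizing over $i_0$) gives the inclusion of ``$1$''-bit sets. Chaining the two inclusions, every ``$1$'' bit of $\LEusedClusivemath{\CadIPobjmath}$ is also a ``$1$'' bit of $\ULEcipInGLEAREADmath{\CadIPobjmath}$ and of $\ULEcipInGLEAmath{\CadIPobjmath}$, and since the LE count of ``$1$'' bits is monotone under such inclusion, $|\LEusedClusivemath{\CadIPobjmath}|\le|\ULEcipInGLEAREADmath{\CadIPobjmath}|\le|\ULEcipInGLEAmath{\CadIPobjmath}|$.

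The hard part --- really the only step that is not routine --- is the observation behind the first inequality: that a single scan of $<\CadIPobjmath,\bipmath>$ writes the identical bit $\LEophostHashmath{\bipmath}$ into all $\LEArowNmath$ rows on that node, so this bit survives the per-node inner merge $\prod_{i}$ even though an inner merge normally discards bits; once that is seen, outer-merging over nodes only helps, and the comparison with $\ULEcipInGLEAmath{\CadIPobjmath}$ is the standard fact that a ``union of intersections'' is contained in the corresponding ``intersection of unions''.
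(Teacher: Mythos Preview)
Your proof is correct and follows essentially the same approach as the paper: the first inequality is argued by tracking a witness opposite host $\bipmath$ to some observation node where it sets the same bit in all $\LEArowNmath$ local estimators, and the second is the Boolean inequality that an OR--of--ANDs is dominated by the corresponding AND--of--ORs. The only cosmetic difference is that the paper proves the latter by a direct case analysis on the bit matrix (if $\prod_i\sum_{\SubnodeIdxmath}=0$ then some row is all zero, etc.), whereas you invoke the equivalent max--min inequality; both are the same ``union of intersections $\subseteq$ intersection of unions'' fact you name at the end.
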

\begin{proof}
When a bit in $\LEusedClusivemath{\CadIPobjmath}$ has value ``1", there exists an IP address pair $<\CadIPobjmath,\bipmath>$ in $\OPmath{\TWinmath}{\CadIPobjmath}$ to set the bit to ``1". In global LEA, $\bipmath$ sets all the bits of $\LEArowNmath$ LE associating with $\CadIPobjmath$. After inner merging in LE, the bit is ``1" in $\ULEcipInGLEAmath{\CadIPobjmath}$. At the same time, $\bipmath$ will appear on at least one observation node and set all the bits of $\LEArowNmath$ LE associating with $\CadIPobjmath$ to ``1". 
Since the bit is ``1" in at least one $\ULEcipInSNodemath{\SubnodeIdxmath}{\CadIPobjmath}$, the bit is still ``1" after outer merging on the global server. So $|\LEusedClusivemath{\CadIPobjmath}| \leq |\ULEcipInGLEAmath{\CadIPobjmath}|$ and $|\LEusedClusivemath{\CadIPobjmath}| \leq |\ULEcipInGLEAREADmath{\CadIPobjmath}|$. 
Then we will proof that $|\ULEcipInGLEAREADmath{\CadIPobjmath}|\leq |\ULEcipInGLEAmath{\CadIPobjmath}|$.

Let $\LEcipInGLEAmath{i}{\CadIPobjmath}=\sum_{\SubnodeIdxmath=0}^{n-1}\LEcipInSLEAmath{i}{\SubnodeIdxmath}{\CadIPobjmath}$ , then $\ULEcipInGLEAmath{\CadIPobjmath}=\prod_{i=0}^{\LEArowNmath-1}\LEcipInGLEAmath{i}{\CadIPobjmath} =\prod_{i=0}^{\LEArowNmath-1}\sum_{l=0}^{n-1}\LEcipInSLEAmath{i}{\SubnodeIdxmath}{\CadIPobjmath} $.
Let $\ULEcipInSNodemath{\SubnodeIdxmath}{\CadIPobjmath}=\prod_{i=0}^{\LEArowNmath-1}\LEcipInSLEAmath{i}{\SubnodeIdxmath}{\CadIPobjmath}$, then $\ULEcipInGLEAREADmath{\CadIPobjmath}=\sum_{\SubnodeIdxmath=0}^{n-1}\ULEcipInSNodemath{\SubnodeIdxmath}{\CadIPobjmath} =\sum_{\SubnodeIdxmath=0}^{n-1}\prod_{i=0}^{\LEArowNmath-1}\LEcipInSLEAmath{i}{\SubnodeIdxmath}{\CadIPobjmath}$. 
To proof that $|\ULEcipInGLEAREADmath{\CadIPobjmath}|\leq |\ULEcipInGLEAmath{\CadIPobjmath}|$ is equivalent to proof that the number of bits with value ``1" in $\sum_{\SubnodeIdxmath=0}^{n-1}\prod_{i=0}^{\LEArowNmath-1}\LEcipInSLEAmath{i}{\SubnodeIdxmath}{\CadIPobjmath}$  is no more than the number of bits with value ``1" in $\prod_{i=0}^{\LEArowNmath-1}\sum_{\SubnodeIdxmath=0}^{n-1}\LEcipInSLEAmath{i}{\SubnodeIdxmath}{\CadIPobjmath}$. 
$\LEcipInSLEAmath{i}{\SubnodeIdxmath}{\CadIPobjmath}$ is a LE and the number of bits in all $\LEcipInSLEAmath{i}{\SubnodeIdxmath}{\CadIPobjmath}$ are the same. Let $\bitInArymath{i}{\SubnodeIdxmath}$ denote an arbitrary bit in $\LEcipInSLEAmath{i}{\SubnodeIdxmath}{\CadIPobjmath}$.
All $\bitInArymath{i}{\SubnodeIdxmath}$ in different observation nodes could be written as an array in the following format: 
$$
\bitArraySetmath = \begin{bmatrix}
\bitInArymath{0}{0} & \cdots & \bitInArymath{0}{\SubnodeNummath-1}\\ 
 \vdots & \ddots  & \vdots\\ 
\bitInArymath{\LEArowNmath-1}{0} & \cdots & \bitInArymath{\LEArowNmath-1}{\SubnodeNummath-1}
\end{bmatrix}
$$ 

In $\bitArraySetmath$, $\prod_{i=0}^{\LEArowNmath-1}\sum_{\SubnodeIdxmath=0}^{n-1}\bitInArymath{i}{\SubnodeIdxmath}$  represents that ``bit or" operations are performed on each line, and then ``bit and" operations are performed on the results; $\sum_{\SubnodeIdxmath=0}^{n-1}\prod_{i=0}^{\LEArowNmath-1}\bitInArymath{i}{\SubnodeIdxmath}$  represents that ``bit and" operations are performed on each line, and then ``bit or" operations are performed on the results. 

When $\prod_{i=0}^{\LEArowNmath-1}\sum_{\SubnodeIdxmath=0}^{n-1}\bitInArymath{i}{\SubnodeIdxmath}=0$, at least one row has all bits equal to ``0", and the result of ``bit and" operation for each column is also 0, then $\sum_{\SubnodeIdxmath=0}^{n-1}\prod_{i=0}^{\LEArowNmath-1}\bitInArymath{i}{\SubnodeIdxmath}=\sum_{\SubnodeIdxmath=0}^{n-1}0=0$. 
When $\prod_{i=0}^{\LEArowNmath-1}\sum_{\SubnodeIdxmath=0}^{n-1}\bitInArymath{i}{\SubnodeIdxmath}=1$, there is no row whose bits are all ``0". 
But $\sum_{\SubnodeIdxmath=0}^{n-1}\prod_{i=0}^{\LEArowNmath-1}\bitInArymath{i}{\SubnodeIdxmath}$ may still be 0. 
Because when each column of $\bitArraySetmath$ contains at least one bit with value ``0", then $\sum_{\SubnodeIdxmath=0}^{n-1}\prod_{i=0}^{\LEArowNmath-1}\bitInArymath{i}{\SubnodeIdxmath}=\sum_{\SubnodeIdxmath=0}^{n-1}0=0$. At this time, each row may contains one or more bits with value ``1". 
For example, when n=3,$\LEArowNmath =3$,$\bitArraySetmath = \begin{bmatrix}
1 & 0 & 0 \\ 
0 & 1 & 0 \\ 
0 & 0 & 1
\end{bmatrix}$, $\prod_{i=0}^{\LEArowNmath-1}\sum_{\SubnodeIdxmath=0}^{n-1}\bitInArymath{i}{\SubnodeIdxmath}=1 $, but$\sum_{\SubnodeIdxmath=0}^{n-1}\prod_{i=0}^{\LEArowNmath-1}\bitInArymath{i}{\SubnodeIdxmath}=0$.

When $\sum_{\SubnodeIdxmath=0}^{n-1}\prod_{i=0}^{\LEArowNmath-1}\bitInArymath{i}{\SubnodeIdxmath}=1$, $\prod_{i=0}^{\LEArowNmath-1}\sum_{\SubnodeIdxmath=0}^{n-1}\bitInArymath{i}{\SubnodeIdxmath}$  also equals to 1. 
Because when $\sum_{\SubnodeIdxmath=0}^{n-1}\prod_{i=0}^{\LEArowNmath-1}\bitInArymath{i}{\SubnodeIdxmath}=1$, at least one column in $\bitArraySetmath$ has all bits with value ``1". Then there is no row in $\bitArraySetmath$ whose bits are all ``0". Becuae $\bitInArymath{i}{\SubnodeIdxmath}$ is an arbitrary bit in $\LEcipInSLEAmath{i}{\SubnodeIdxmath}{\CadIPobjmath}$, then:
\begin{itemize}
\item When a bit has value ``1" in $\ULEcipInGLEAREADmath{\CadIPobjmath}$, the bit has value ``1" in $\ULEcipInGLEAmath{\CadIPobjmath}$; 
\item When a bit has value ``0" in $\ULEcipInGLEAmath{\CadIPobjmath}$, the bit has value ``0" in $\ULEcipInGLEAREADmath{\CadIPobjmath}$;
\item When a bit has value ``1" in $\ULEcipInGLEAmath{\CadIPobjmath}$, the bit may has value ``0" in $\ULEcipInGLEAREADmath{\CadIPobjmath}$
\end{itemize}

So the number of bits with value ``1" in $\ULEcipInGLEAREADmath{\CadIPobjmath}$ is no more than that in $\ULEcipInGLEAmath{\CadIPobjmath}$ and $|\LEusedClusivemath{\CadIPobjmath}|\leq |\ULEcipInGLEAREADmath{\CadIPobjmath}|\leq |\ULEcipInGLEAmath{\CadIPobjmath}|$.
\end{proof}

LE estimates cardinality based on the number of bits with value ``1". Theorem \ref{thm-DistributedLE_close_toLE_cnt} shows that the number of bits with value ``1" in $\ULEcipInGLEAREADmath{\CadIPobjmath}$ is closer to the number of bits with value ``1" in the LE which is used by $\CadIPobjmath$ exclusively. So the accuracy of estimating cardinality by $\ULEcipInGLEAREADmath{\CadIPobjmath}$ is more accuracy.

READ not only does not need to transfer the entire LEA to the global server, but also has a higher accuracy in estimating cardinalities of global candidate super points. When estimating cardinalities, the amount of data transmitted between each observation node and the global server is $(32*\CIPnumbermath+|\LEbitsSetmath|*\CIPnumbermath)$ bits, where $\CIPnumbermath$ is the number of candidate super points recovered by REC. $32*\CIPnumbermath$ is the data size of global candidate super points transmitting to each observation node from the global server, and $|\LEbitsSetmath|*\CIPnumbermath$ is the data size of LE of candidate super points that transmitting to the global server from each observation node. When $(32*\CIPnumbermath+|\LEbitsSetmath|*\CIPnumbermath)<\LEArowNmath*\LEAcolNmath*|\LEbitsSetmath|$, the data transmission between an observation node and the global server is less than the data transmission of the entire LEA. Global candidate super points account for only a small portion of all IP addresses, usually hundreds to thousands. In order to improve the estimation accuracy, the value of $\LEArowNmath *\LEAcolNmath$ will be more than tens of thousands. So READ reduces the amount of data transmitted between observation nodes and the global server. READ can also apply more powerful counters to replace bits in RE and LE to realize the detection of super points under sliding time window as discussed in the next section.

\section{Distributed super points detection under sliding time window} \label{sec-READunderSlidingWindow}
READ only scans IP address pairs at each observation node, so only sliding window counter is needed to record opposite hosts incrementally at the observation node. The master data structure at the observation node consists of two parts: REC and LEA. The estimators of REC and LEA are RE and LE, while the counters used by RE and LE are bits. So the master data structure at the observation node can be regarded as a set of bits. Using counter DR\cite{ispa2017:highspeednetworksuperpointsdetectionbasedslidingwindowgpu} or AT\cite{VATE2018:sw} under sliding window instead of bit in REC and LEA at each observation node, distributed super point detection under sliding window can be realized.

The counter under the sliding window needs to be updated. After all LE associating with the global candidate super points are sent to the global server, the observation node can start to update the sliding counter. At the end of each time window, the REC on the global server is generated by these REC collecting from all observation nodes, there is no need to update it.

Under the sliding time window, the observation node only needs to send the active state of the counter to the global server, that is, at the end of the time window, each sliding window counter can be changed into a bit: 0 for inactivity, 1 for activity. Therefore, under sliding time window, the traffic between observation nodes and the global server is the same as that under discrete time window.

\section{Experiments and analysis} \label{sec-experiments}
In order to test the performance of READ, four groups of high-speed network traffic are used to carry out experiments in this section. The experiment analyzes READ from the aspects of detection error rate, memory usage and running time. We compared READ with DCDS, VBFA, CSE and SRLA.
\subsection{Experiment data}
In this paper, four groups of high-speed network traffic are used. Two of the four sets of data come from the 10 Gb/s Caida\cite{expdata:caida}. The other two groups are from the network boundary of the 40Gb/s CERNET in Nanjing network\cite{expdata:iptracecernetjs:en}. 

The Caida data acquisition dates are February 19, 2015 and January 21, 2016 (denoted by $Caida\ 2015\_2\_19$ and $Caida\ 2016\_01\_21$), and the data acquisition dates of the two groups of CERNET Nanjing network were October 23, 2017 and March 8, 2018 (denoted by $IPtas\ 2017\_10\_23$ and $IPtas\ 2018\_03\_08$). The collection time of the four groups of data is one hour from 13:00. The collected data are raw IP Trace. Caida data collected Trace between Seattle and Chicago. In this paper, the IP on Seattle side is defined as $\aipmath$, and the IP on Chicago side is defined as $\bipmath$. IPtas data collects traces between CERNET Nanjing network and other networks. In this paper, the IP in Nanjing network is $\aipmath$, and in the other network is $\bipmath$.

In the experiment of this section, the length of time window is 5 minutes, and the threshold of super point is set to 1024. Therefore, each group of experimental data contains 12 time windows. Table \ref{tbl_exp_data_static} lists the statistical information of each experimental data. The number of $\aipmath$ in Caida data is more than the number of $\aipmath$ in IPtas data, which is 1.85 times more on average. But the average cardinality per $\aipmath$ in Caida data is less than that in IPtas data, only $21.389\%$ of the latter. The number of packets per second determines the number of IP address pairs that need to be processed per second. Therefore, packet speed (in millions of packets per second, Mpps) is a key attribute. As can be seen from Table 1, the average packet speed of IPtas data is 3.89 times that of Caida data. Therefore, Caida data and IPtas data represent two different types of network data sets, which can test the effect of the algorithm more comprehensively.

\begin{table}
\centering
\caption{Statistics of experiment data}
\label{tbl_exp_data_static}
\begin{tabular}{c}                                                                                                                                                                                                                           
\centering
\includegraphics[width=0.47\textwidth]{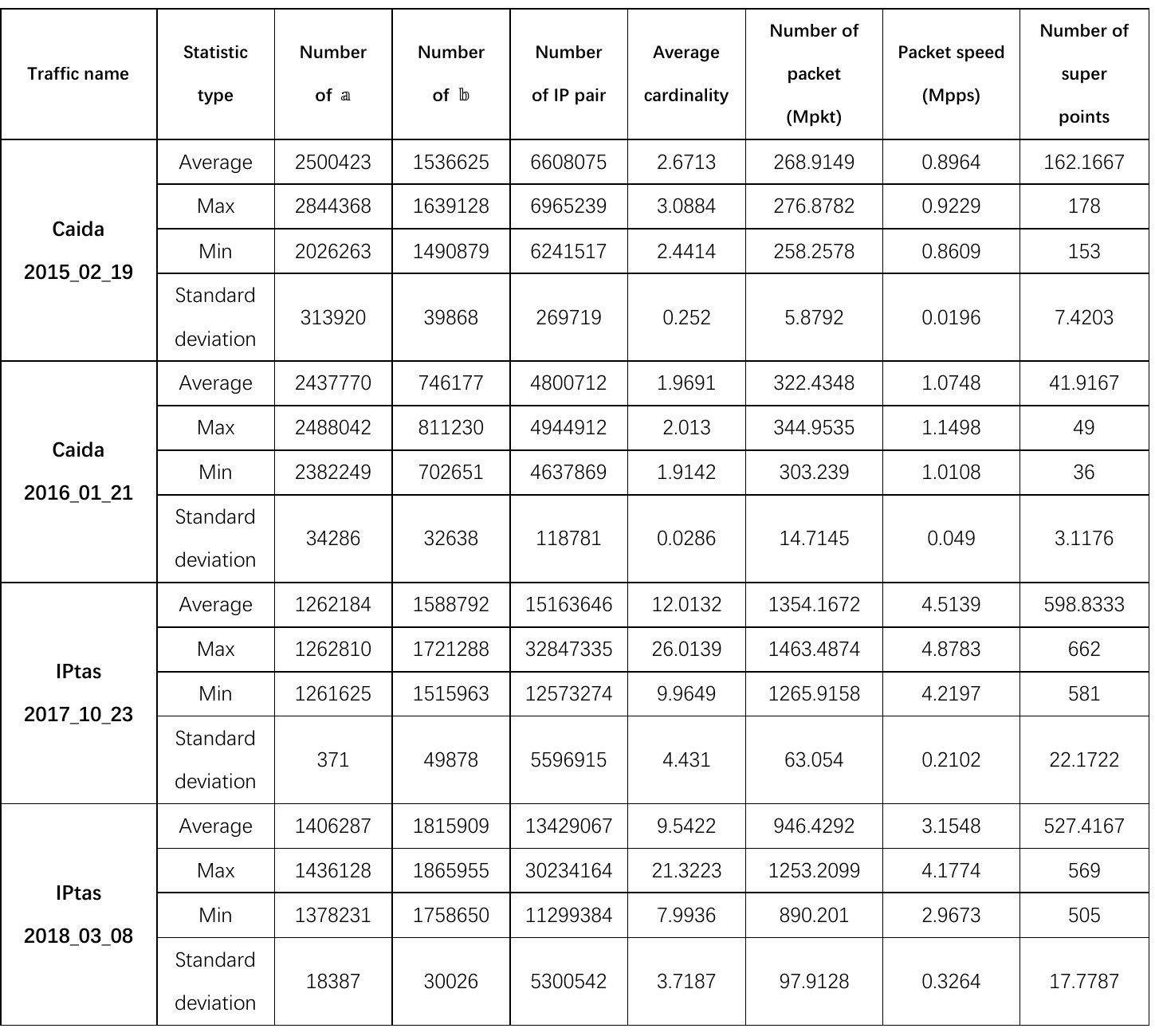}
\end{tabular}
\end{table}

\subsection{The purpose and scheme of the experiment}
The experimental purposes of this paper are as follows:
\begin{itemize}
\item Analyse the accuracy of READ and test whether REC can accurately generate candidate super points.
\item Analyzing the memory occupancy and running time of READ;
\item Test the number of candidate super points generated by READ and the amount of data that needs to be transmitted between each observation node and the global server.
\end{itemize}

In order to process high-speed network data in real time, this paper deploys READ, DCDS, VBFA, CSE and SRLA algorithm on GPU platform. All the experiments in this paper run on a server with GPU. The running environment is: Intel Xeon E5-2643 CPU, 125 GB memory, Nvidia Titan XP GPU, 12 GB memory, Debian Linux 9.6 operating system.

In the experiment, the parameters of REC are $\RECaryNlgtmath = 6$, $\RECrowNlgtmath$ = 3, $\RECcolNlgtmath{0}=\RECcolNlgtmath{1}=\RECcolNlgtmath{2}=14$; the parameters of LEA are $\LEArowNmath=5$,$\LEAcolNmath=2^{15}$ and $|\LEbitsSetmath|=2^{15}$. From the above parameters, it can be seen that REC occupies 3 MB of memory and LEA occupies 320 MB of memory. Because there is no distributed experimental data, the experiment in this section is carried out under a single node. However, from the previous analysis of READ, we can see that the error rate of READ in distributed environment will not be higher than that in single node environment.

\subsection{Memory and false rate}
In order to analyze the memory and false rate of READ, this section compares READ with DCDS, VBFA, CSE and SRLA algorithm. Table \ref{tbl_exp_memoryAndFalseRate} shows the average memory occupancy and error rate of READ and comparison algorithms in different experimental data sets. False positive rate (FPR), false negative rate (FNR) and false total rate (FTR) are three kind of false rate. Let N represent the number of super points, $N^-$ represent the number of super points that not detected out by an algorithm and $N^+$ represent the number of hosts whose cardinalities are less than the threshold but detected as super points by an algorithm. Then $FPR=100*N^+/N\%$, $FNR=100*N^-/N\%, FTR=FPN+FNR$.

\begin{table}
\centering
\caption{Memory and false rate}
\label{tbl_exp_memoryAndFalseRate}
\begin{tabular}{c}                                                                                                                                                                                                                           
\centering
\includegraphics[width=0.47\textwidth]{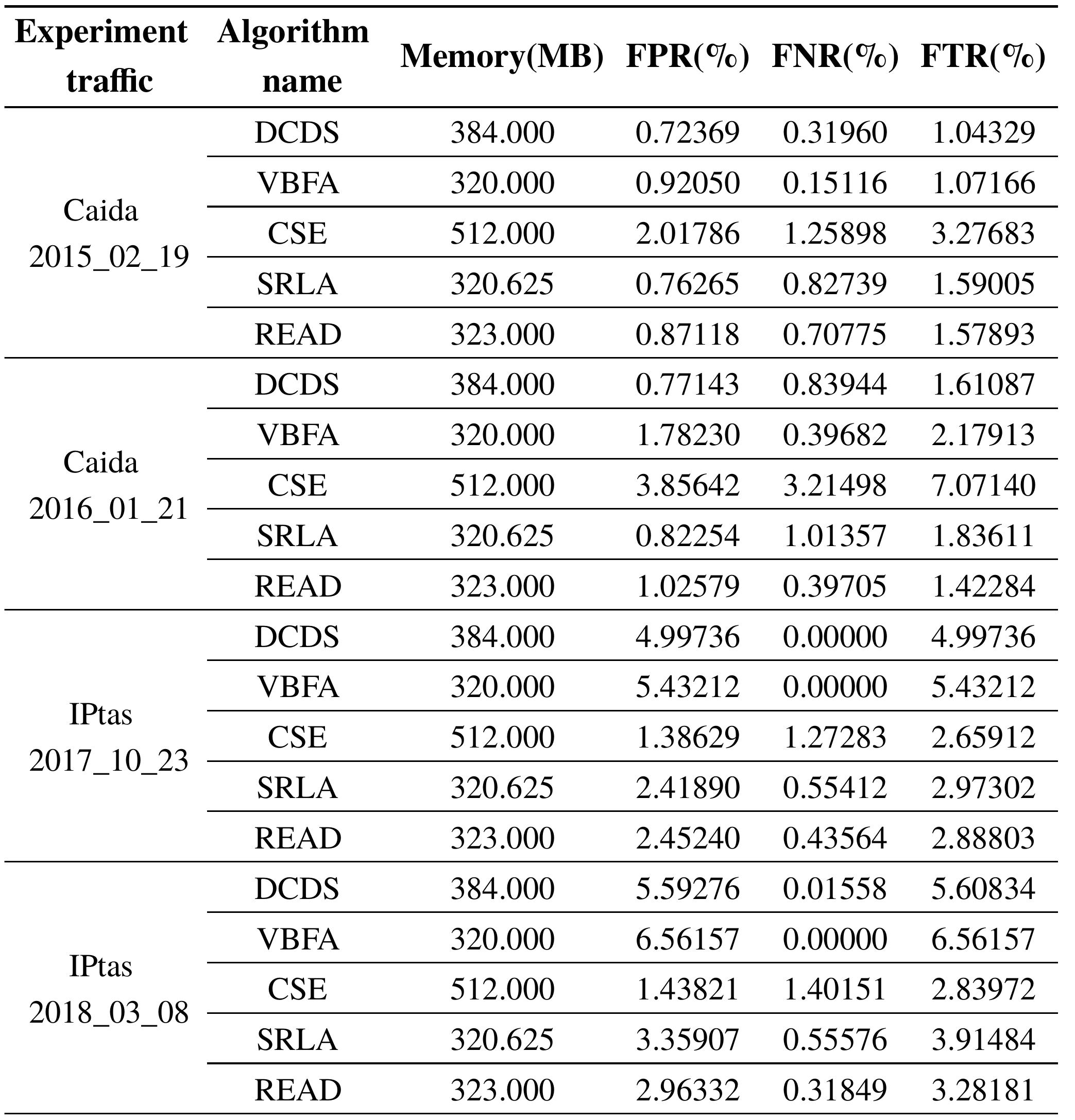}
\end{tabular}
\end{table}

Table \ref{tbl_exp_memoryAndFalseRate} shows that READ occupies less memory than DCDS and CSE, and only 3 MB more memory than VBFA. In terms of error rate, the error rate of READ is close to that of SRLA algorithm.

\subsection{Running time analysis}
Figure \ref{fig_exp_avgGScanT_compare} shows the time of IP address pairs scanning (GScanT). The graph shows that the GScanT of READ is slightly higher than that of SRLA algorithm. However, the GScanT of each algorithm is not more than 4 seconds, which can process 40 Gb/s of high-speed network traffic in real time.
\begin{figure}[!ht]
\centering
\includegraphics[width=0.47\textwidth]{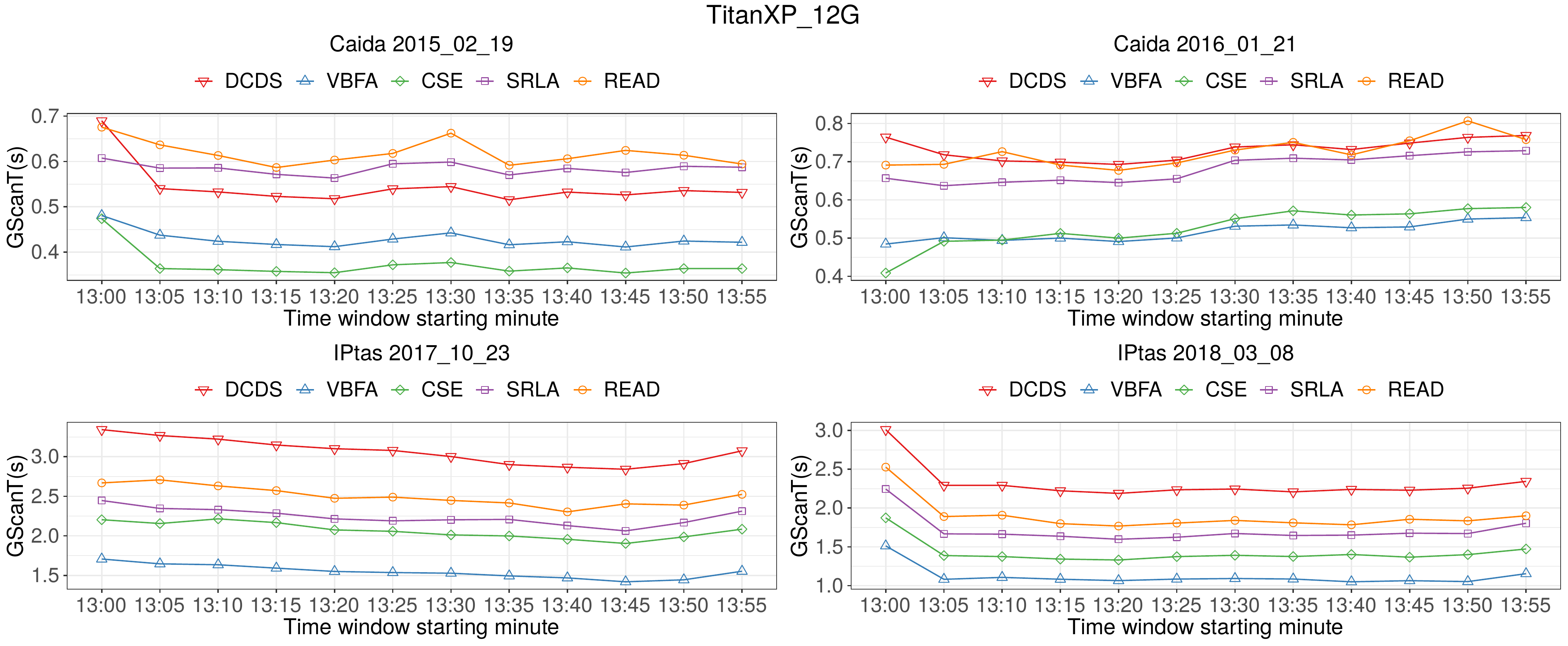}
\caption{Time of scan IP address pair}
\label{fig_exp_avgGScanT_compare}
\end{figure}

Figure \ref{fig_exp_avgGEstT_compare} shows the time of candidate super point cardinality estimation (GEstT). The graph shows that GEstT of READ is close to DCDS, VBFA and SRLA algorithm, much lower than CSE, and GEstT of READ is not higher than 2.5 seconds. Therefore, READ can detect super points in real-time from 40Gb/s high-speed network.

\begin{figure}[!ht]
\centering
\includegraphics[width=0.47\textwidth]{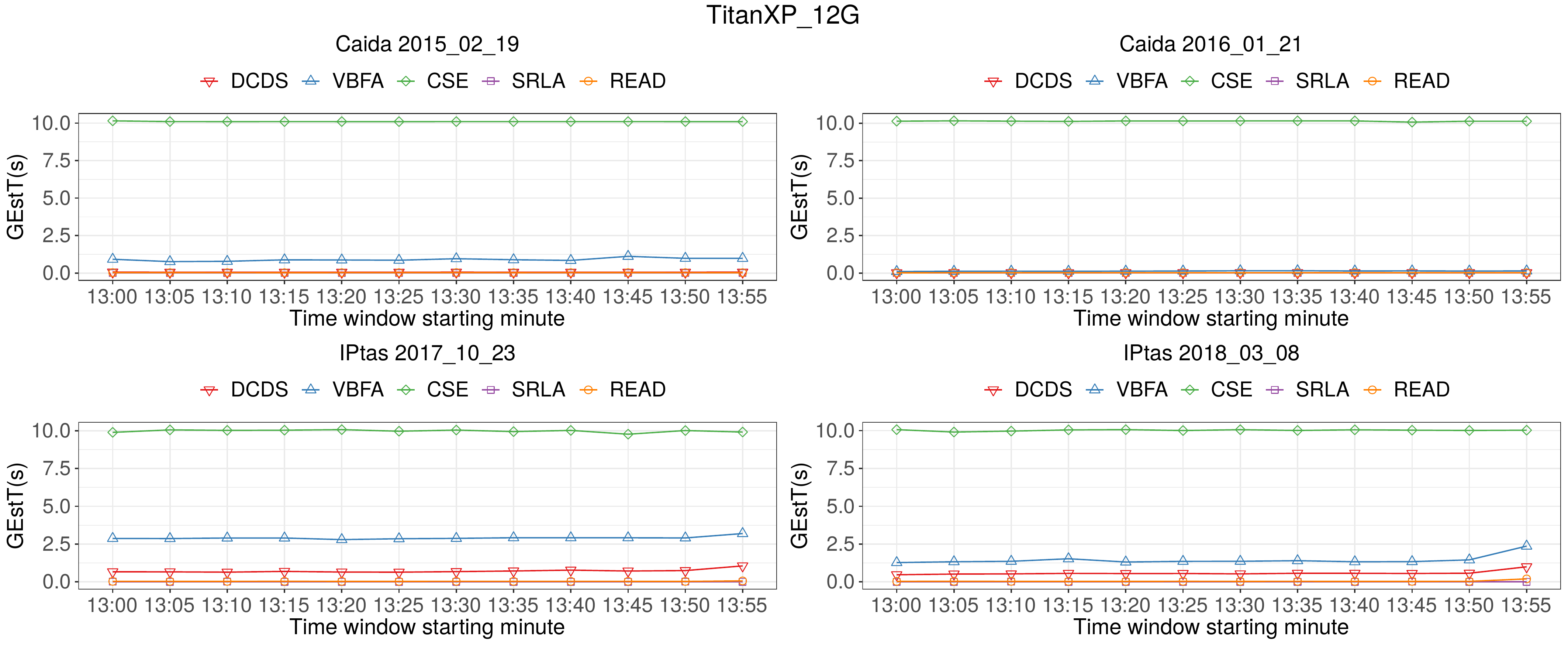}
\caption{Time of estimate candidate super points}
\label{fig_exp_avgGEstT_compare}
\end{figure}

\subsection{Data transmission under distributed environment}
READ is a distributed algorithm. In distributed environment, data will be transmitted between each observation node and the global server, including:
\begin{itemize}
 \item REC from observation node to the global server;
 \item Candidate super points from the global server to each observation node \item The LE set of candidate super points from each observation node to the global server. 
\end{itemize}

In the above data, the size of REC is fixed. The size of candidate super points and LE in transmission depends on the number of candidate super points. From the running process of READ, it can be seen that the candidate super points generated by READ when running in single node environment are the same as those generated when running in distributed environment. Therefore, the number of candidate super points generated at runtime under a single node can be used to determine the size of data transmission between observation nodes and the global server in a distributed environment.

Table \ref{fig_exp_transmit_data_static} lists data transmission between each observation node and the global server. The number of candidate super points is the number of candidate super points produced by REC. The size of candidate super points is multiplied by 4 bytes (each IPv4 address size is 4 bytes); the size of candidate super points' LE is multiplied by $2^{11}$ bytes (LE contains $2^{14}$ bits, $2^{11}$ bytes). The total amount of data transmitted is the sum of the size of REC, the size of candidate super point and the size of LE of candidate super points. The master data structure size is the sum of REC and LEV. The percentage of transmitted data is the ratio of the total amount of transmitted data to the size of the master data structure. From Table \ref{fig_exp_transmit_data_static}, we can see that the average amount of data transmitted by READ between the global server and each observation node is not more than 7.5 MB, which only occupies less than $2.3\%$ of the total size of master data structure.

\begin{table*}
\centering
\caption{Transmitting data between each observation node and the global server}
\label{fig_exp_transmit_data_static}
\begin{tabular}{c}                                                                                                                                                                                                                           
\centering
\includegraphics[width=0.9\textwidth]{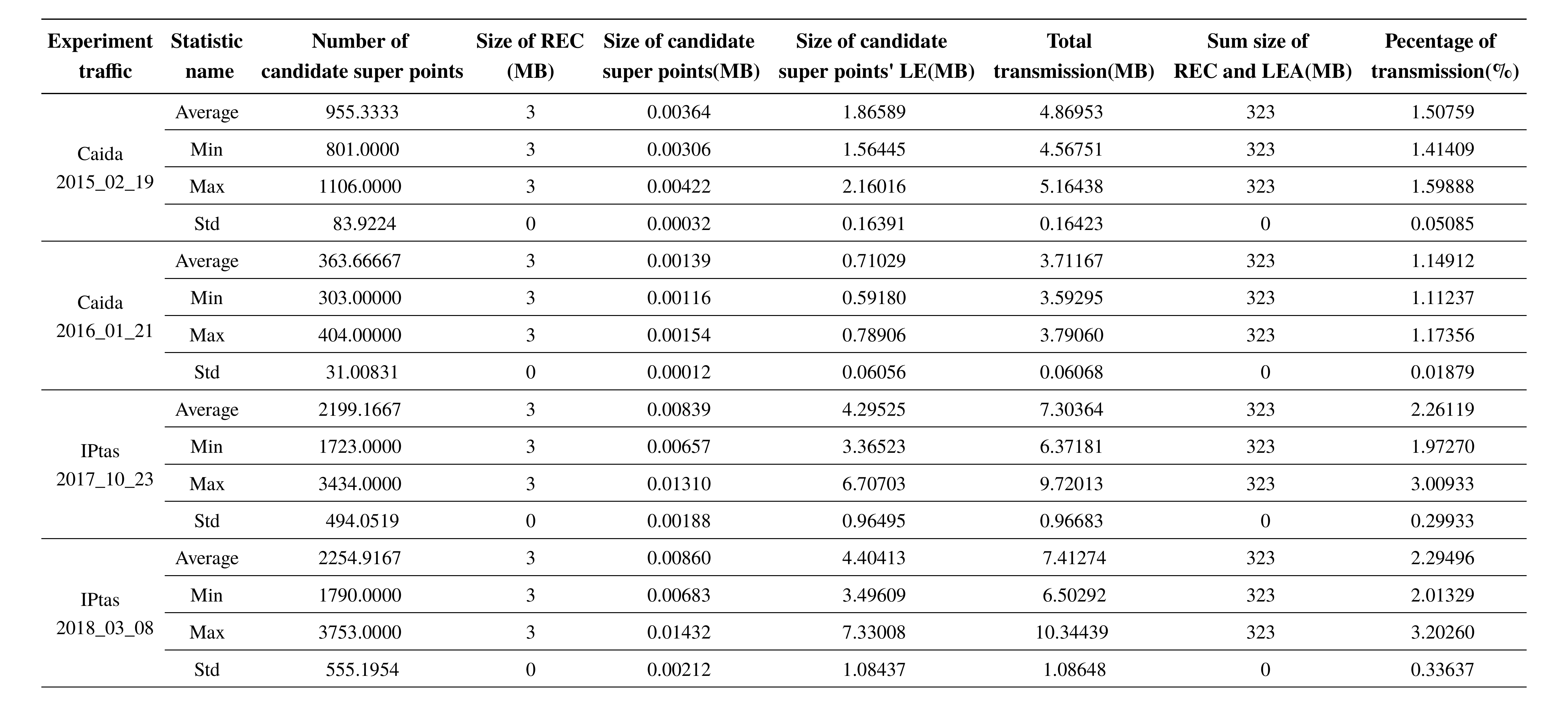}
\end{tabular}
\end{table*}

\subsection{Experiment conclusion}
From the above experiments, the following conclusions can be drawn:
\begin{itemize}
\item The memory consumption and error rate of READ is similar to the existing algorithms.
\item The running time of READ is small enough to handle 40Gb/s networks in real time.
\item In distributed environment, READ only needs to transmit up to 10.4 MB of memory between each observation node and the global server, which accounts for less than $3.21\%$ of the size of master data structure. It is obviously superior to other algorithms and has the advantage of low communication overhead.
\end{itemize}

\section{Conclusion} \label{sec-conclusion}
READ uses REC to generate candidate super points in distributed environment. REC is a three-dimensional structure of RE. Because RE has the characteristics of small memory occupation and fast computing speed, REC can generate candidate super points from 40Gb/s high-speed network with only 3MB of memory. LEA is used to estimate the cardinalities of candidate super points and filter out the super points. READ does not need to transfer the entire LEA to the global server. For 40 Gb/s high-speed network,  the data size transmitted between each observation node and the global server is only $3.21\%$ of the sum of REC and LEA. Low data communication overhead ensures the efficient operation of READ in distributed environment even under the sliding time window.
\section*{Reference}
\iftoggle{ACM}{
\bibliographystyle{ACM-Reference-Format}
}
\iftoggle{IEEEcls}{
\bibliographystyle{IEEEtran}
}
\iftoggle{ElsJ}{
\bibliographystyle{elsarticle-num}
}

\bibliography{..//ref} 

\end{document}